\documentclass[11pt]{article}

\usepackage[utf8]{inputenc}

\usepackage{amsmath,amsfonts,amsthm,amssymb,color}
\usepackage[usenames,dvipsnames,svgnames,table]{xcolor}
\definecolor{darkgreen}{rgb}{0.0,0,0.9}
\usepackage{mathtools}
\usepackage{authblk}
\usepackage{fullpage}
\usepackage{parskip}
\usepackage{comment}
\usepackage{tikz}
\usepackage{bbm}
\usepackage{dsfont}
\usepackage[sc]{mathpazo}
\usepackage[basic]{complexity}
\usepackage{algorithm2e}
\usepackage[colorlinks=true,
citecolor=OliveGreen,linkcolor=BrickRed,urlcolor=BrickRed,pdfstartview=FitH]{hyperref}
\usepackage[capitalize,nameinlink]{cleveref}


\SetKwInOut{Input}{Input}
\SetKwInOut{Output}{Output}
\SetKwFunction{Uncross}{\textsc{Uncross}}
\SetKwFunction{MergeUncross}{\textsc{MergeUncross}}
\SetKwFunction{ConnectedComponents}{\textsc{ConnectedComponents}}
\SetKwFunction{PerfectMatching}{\textsc{PerfectMatching}}
\SetKwBlock{InParallel}{in parallel do}{end}
\SetKwFor{ParallelFor}{for}{in parallel do}{end}

\newcommand*{\suppress}[1]{}

\makeatletter
\def\thm@space@setup{%
	\thm@preskip= 10pt
	\thm@postskip=\thm@preskip 
}
\makeatother

\makeatletter
\renewcommand{\paragraph}{%
	\@startsection{paragraph}{4}%
	{\z@}{5pt}{-1em}%
	{\normalfont\normalsize\bfseries}%
}
\makeatother

\newtheorem{theorem}{Theorem}
\newtheorem{lemma}{Lemma}

\newtheorem{proposition}[theorem]{Proposition}

\theoremstyle{definition}

\newenvironment{fminipage}%
{\begin{Sbox}\begin{minipage}}%
		{\end{minipage}\end{Sbox}\fbox{\TheSbox}}

\def\union{\cup}
\def\intersect{\cap}

\newcommand\ff{\boldsymbol{\mathit{f}}}
\renewcommand\gg{\boldsymbol{\mathit{g}}}

\newcommand\pp{\boldsymbol{\mathit{p}}}

\newcommand\yy{\boldsymbol{\mathit{y}}}

\newcommand\xx{\boldsymbol{\mathit{x}}}

\newcommand{\Mc}{\mathcal{M}}
\newcommand{\Lc}{\mathcal{L}}                                                    
\newcommand{\rin}{\rho_{\text{in}}}
\newcommand{\rout}{\rho_{\text{out}}}
\newcommand{\Mg}{M_{\text{girl}}}
\newcommand{\Mb}{M_{\text{boy}}}
\newcommand{\Pin}{\Pi_{\text{in}}}
\newcommand{\Pout}{\Pi_{\text{out}}}
\newcommand{\st}{\text{s.t.}}

\title{Finding Stable Matchings\\ 
that are Robust to Errors in the Input\footnote{Supported in part by NSF grant CCF-1815901. Appeared in ESA 2018.}}

\author[1]{Tung Mai}
\author[1]{Vijay V.~Vazirani}

\affil[1]{University of California, Irvine}

\date{}

\begin{document}
	\maketitle

\begin{abstract}
We study the problem of finding solutions to the stable matching problem that
are robust to errors in the input and we obtain a polynomial time algorithm for a special class of errors.
In the process, we also initiate work on a new structural question concerning the stable matching 
problem, namely finding relationships between the lattices of solutions of two ``nearby'' 
instances.

Our main algorithmic result is the following:
We identify a polynomially large class of errors, $D$, that can be introduced in a 
stable matching instance.
Given an instance $A$ of stable matching, let $B$ be the random variable that represents the instance that results after introducing {\em one} error from $D$, chosen via a given discrete probability 
distribution. The problem is to find a stable matching for $A$ that maximizes the probability of
being stable for $B$ as well. Via new structural properties of the type described in the question
stated above, we give a combinatorial polynomial time algorithm for this problem.

We also show that the set of robust stable matchings for instance $A$, under probability distribution $p$, forms a sublattice of the lattice of stable matchings for $A$. We give an efficient algorithm for finding a succinct representation for this set; this representation has the property that any member of the set can be efficiently retrieved from it.
\end{abstract}

\section{Introduction}
\label{sec.intro}

Ever since its introduction in the seminal 1962 paper of Gale and Shapley \cite{GaleS}, 
the stable matching problem has been the subject of intense study from numerous different angles 
in many fields, including computer science, mathematics, operations research, economics and 
game theory, e.g., see the books \cite{Knuth-book, GusfieldI, Manlove-book}. The very first
matching-based market, namely matching medical interns to hospitals, was built around this problem,
e.g., see \cite{GusfieldI, Roth}.
Eventually, this led to an entire inter-disciplinary field, namely matching and market design
\cite{Roth}. The stable matching problem and market design were the
subject of the 2012 Nobel Prize in Economics, awarded to Roth and Shapley \cite{RS}. 

Another topic that has been extensively studied is the design of algorithms that produce robust solutions, e.g., see the books \cite{CaE:06, BEN:09}. 
Yet, there is a paucity of results at the intersection of these two topics. Indeed, we are aware of only two works \cite{aziz1,aziz2}; see Section \ref{sec.related} for a detailed description of these works. As explained there, their notion of robustness of a stable matching solution is quite distinct from ours. Our first contribution is a combinatorial polynomial time algorithm for finding a robust stable matching for a special case of allowable errors in the input. Our second contribution is to initiate work on a new structural question, namely finding relationships between the lattices of solutions of two ``nearby'' instances of stable matching. 

We note that a particularly impressive aspect of the stable matching problem is its deep and pristine
combinatorial structure. This in turn has led to efficient algorithms for numerous
questions studied about this problem, e.g., see the books mentioned above. In studying the new structural question, we have restricted ourselves to ``nearby'' instances which 
differ in only one agent's preference list. Clearly, this is only the tip of the iceberg as
far as ``nearby'' instances go. Moreover, the structural results are so clean and extensive -- see also our followup paper \cite{MV.Birkhoff} --
that they are likely to find algorithmic applications beyond the problem of finding robust 
solutions. In particular, with ever more interesting matching-based markets being designed and 
launched on the Internet \cite{Roth}, these new structural properties could find interesting 
applications and are worth studying further.

We will introduce our version of the problem of finding robust stable matchings via the following model:
Alice has an instance $A$ of the stable matching problem, over $n$ boys and $n$ girls, 
which she sends it to Bob over a channel that can introduce errors. Let $B$ denote the instance
received by Bob. Let $D$ denote a polynomial sized domain from which errors are introduced by 
the channel;
we will assume that the channel introduces at most one error from $D$. 
We are also given the discrete probability distribution, $p$ over $D$, from which the channel
picks {\em one} error. In addition,
Alice sends to Bob a matching, $M$, of her choice, that is stable for instance $A$.
Since $M$ consists of only $O(n)$ numbers of $O(\log n)$ bits each, as opposed to $A$ which requires 
$O(n^2)$ numbers, Alice is able to send it over an error-free channel. Now
Alice wants to pick $M$ in such a way that it is stable for instance $A$ and has the highest probability
of being stable for the instance $B$ received by Bob. Hence she picks $M$ from the set
$$ \arg\max_{N} \{ Pr_p[N \ \mbox{is stable for instance} \ B ~|~ N \ \mbox{is stable for instance} \ A] \}, $$
We will say that such a matching $M$ is {\em robust}.
We seek a polynomial time algorithm for finding such a matching.

Clearly, the domain of errors, $D$, will have to be well chosen to solve this problem.
A natural set of errors is 
{\em simple swaps}, under which the positions of two adjacent boys in a girl's list, or 
two adjacent girls in a boy's list, are interchanged.
We will consider a generalization of this class of errors, which we call {\em upward shift}.
For a girl $g$, assume her preference list in instance $A$ is 
$\{ \ldots,b_1,b_2, \ldots, b_k, b , \ldots \}$. 
Move up the position of $b$ so $g$'s list becomes $\{ \ldots, b, b_1,b_2, \ldots, b_k , \ldots \}$, 
and let $B$ denote the resulting instance. Then we will say that $B$ is obtained from $A$ by an 
upward shift. An analogous operation is defined on a boy $b$'s list. 
The domain $D$ consists of all such upward shifts on each boy's and each girl's list. 
Clearly, $|D|$ is $O(n^3)$, i.e., it is polynomially bounded. 
As will be clarified later, the operation of downward shift is much harder to deal with and we 
leave it as an open problem; see the Remark at the end of Section \ref{sec.lp}. In the rest of the paper, we will shorten ``upward shift'' to simply 
``shift''. Let us also clarify that we will deal with the generalization of stable matching in which incomplete preference lists are allowed. 

We next study the set of robust stable matchings for instance $A$ under probability distribution $p$. We show that this set forms a sublattice of the lattice of stable matchings for $A$ and we give an efficient algorithm for finding a succinct representation for this set. This representation has the property that any member of the set can be efficiently retrieved from it.

Our main theorem is the following.

\begin{theorem}
\label{thm.robust}
Given an instance $A$ of the stable matching problem, with possibly incomplete preference lists, and a probability distribution $p$ over the domain $D$ of errors defined above, there is an efficient algorithm that finds:
\begin{itemize}
	\item A robust stable matching for $A$.
	\item A partial order $\Pi_0$ on $O(n^2)$ elements such that its closed sets are isomorphic to the set of robust stable matching for $A$. The latter set forms a lattice $\Lc_0$ which is furthermore a sublattice of the lattice $\Lc_A$ of all stable matchings for $A$. 
\end{itemize}
The main computational step of our algorithm is to find one max-flow in a network with $O(n^2)$ vertices, where $n$ is the number of agents of each sex in instance $A$. 
\end{theorem}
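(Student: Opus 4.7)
The plan is to reduce the robust-matching problem to a single minimum-cut computation on the rotation poset of $A$ and then invoke the Picard--Queyranne lattice of minimum cuts to obtain the second bullet. The starting point is a combinatorial characterization of when a shift error destabilizes a stable matching $M$ of $A$: if the error moves boy $b$ upward in girl $g$'s list past $b_1, \ldots, b_k$, then $(g,b)$ becomes a blocking pair of the resulting instance $B$ precisely when \textbf{(i)} $b$ prefers $g$ to $M(b)$ and \textbf{(ii)} $M(g) \in \{b_1, \ldots, b_k\}$. Stability of $M$ in $A$, together with the fact that only $g$'s list is perturbed, rules out every other candidate blocking pair. A symmetric characterization holds for shifts on boys' lists. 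The problem thus reduces to selecting a stable matching of $A$ that minimizes the total probability of a destabilizing error.

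My next step is to translate the per-error indicators into conditions on the closed set $S$ of the rotation poset $\Pi_A$ of $A$, which has $O(n^2)$ elements and whose closed sets biject with $\Lc_A$ (Irving--Leather--Gusfield). As $S$ grows from $\emptyset$ to all of $\Pi_A$, each boy's partner strictly worsens through a prescribed sequence of his stable partners and each girl's partner strictly improves through a prescribed sequence. Consequently condition \textbf{(i)} is equivalent to $S$ containing a specific threshold rotation $\rho_e^{(i)}$, and condition \textbf{(ii)} is equivalent to $S$ containing a specific threshold rotation $\rho_e^{(ii)}$; in the special case when $(b,g)$ is itself a stable pair, these two thresholds collapse to a single rotation. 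For boy-shift errors both thresholds appear instead in negated form ``$\rho \notin S$''. The cost contribution of each error is thereby a conjunction of at most two rotation-membership conditions, and the total cost $\sum_e p_e \one[\text{error }e\text{ destabilizes}]$ becomes a function on closed sets of $\Pi_A$.

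With this encoding I would build a flow network on $O(n^2)$ vertices in the Irving--Leather--Gusfield style: one vertex per rotation, with infinite-capacity precedence edges forcing finite-capacity $s$--$t$ cuts to correspond to closed sets of $\Pi_A$, plus finite-capacity ``error'' gadgets encoding each per-error conjunction with total capacity $p_e$. A single max-flow then yields a minimum cut from which a robust stable matching is read off. The crux of the argument is verifying that the global objective can in fact be written as a min-cut: a priori $\one[\rho_1 \in S] \one[\rho_2 \in S]$ is supermodular on the subset lattice, so the reduction must exploit the special structure of the rotations arising from shift errors---for instance, that the two threshold rotations for each error share an agent of the underlying perturbation---to rewrite the objective in a submodular, and hence cut-expressible, form.

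For the second bullet I would invoke the Picard--Queyranne theorem: the minimum $s$--$t$ cuts of any flow network form a distributive lattice, canonically representable as the lattice of closed sets of a poset obtained from the residual graph of a max-flow by contracting its strongly connected components. Taking $\Pi_0$ to be this derived poset yields an $O(n^2)$-element representation of the set of robust stable matchings. The sublattice claim---that $\Lc_0 \subseteq \Lc_A$ as lattices and not only as ordered sets---follows because both the $\Lc_A$-meet/join and the $\Lc_0$-meet/join are induced by intersection and union of the underlying closed subsets of $\Pi_A$. The main obstacle I anticipate is the min-cut reduction in paragraph three, namely establishing submodularity of the objective despite the destabilization indicators being explicit conjunctions. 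That downward shifts are listed as open presumably reflects the failure of exactly this structural submodularity there; once it is in hand for upward shifts, the rest of the theorem follows by standard lattice and network-flow machinery.
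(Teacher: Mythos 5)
Your overall architecture coincides with the paper's: rotation poset, per-error destabilization conditions translated into rotation-membership conditions on the closed set $S$, a single min-cut/max-flow on an $O(n^2)$-vertex network (the paper phrases it as an IP whose LP-relaxation dualizes to a circulation problem, but this is the same closure/min-cut computation), and the poset of strongly connected components of the residual graph for the succinct representation (your Picard--Queyranne invocation is exactly the paper's construction of $\Pi_0$). However, there is a genuine gap at the step you yourself flag as ``the crux'': you never establish that the destabilization indicator is cut-expressible, and the paper's resolution is a specific structural lemma that your sketch does not supply. Concretely, for a shift on girl $g$'s list the indicator is the conjunction of \emph{three} conditions on $S$: the rotation $\rho_1$ moving $g$'s partner up into the window $\{b_1,\dots,b_k\}$ lies in $S$, the rotation $\rho_2$ moving $b$ below $g$ lies in $S$, and the rotation $\rho_3$ moving $g$'s partner up out of the window does \emph{not} lie in $S$. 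The paper proves (its Lemma on rotation comparison, feeding Theorem~\ref{thm:unique}) that $\rho_1 \preceq \rho_2$ in $\Pi_A$ whenever both exist --- the argument uses that every rotation occurs in every elimination sequence from $M_0$ to $M_z$ together with monotonicity of $g$'s partner --- so the two ``in'' conditions collapse to a single one, and each error contributes exactly one edge $(\rho_{\text{out}}^B,\rho_{\text{in}}^B)$ whose separation by $S$ is the submodular ``cut'' indicator. Your suggested substitute (``the two threshold rotations share an agent'') is not by itself sufficient and is not even literally true here: $\rho_1$ is a rotation involving $g$ and $\rho_2$ one involving $b$; comparability is what does the work, and without it the conjunction is supermodular and the min-cut reduction fails, exactly as you worry.

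A secondary error compounds this: you assert that condition (ii) ($M(g)\in\{b_1,\dots,b_k\}$) is equivalent to $S$ containing a single threshold rotation. It is not --- since $g$'s partner improves monotonically as $S$ grows, membership in the window is an \emph{interval} condition (contains $\rho_1$, excludes $\rho_3$), and dropping the upper threshold $\rho_3$ would wrongly count as destabilized those matchings in which $g$'s partner has risen above $b_1$. Once both the window structure and the comparability $\rho_1\preceq\rho_2$ are in place, your remaining steps (min-cut, integrality, the lattice of minimum cuts, and the sublattice claim via union/intersection of closed subsets of $\Pi_A$) go through exactly as in the paper, as does the reduction of boy-shifts by the symmetric argument and the extension to incomplete lists by discarding shifts that change the set of matched agents.
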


\subsection{Overview of results and technical ideas}
\label{sec.overview}

Henceforth, we will assume that we are dealing with a stable matching instance with complete preference lists; we will finally generalize to incomplete lists in Section \ref{sec:incomplete}.

Let us first summarize some well-known structural facts, e.g., see \cite{GusfieldI}.
The set of stable matchings of an instance form a distributive lattice:
given two stable matchings $M$ and $M'$, their meet and join involve taking, for each boy,
the optimal or pessimal choice, respectively. It is easy to show that the resulting two matchings
are also stable. The extreme matchings of this lattice are called {\em boy optimal} and
{\em girl-optimal} matchings. A deep notion about this lattice is that of a rotation. A 
{\em rotation}, on an ordered list of $k$ boy-girl pairs, when applied to a matching $M$ in which 
all these boy-girl pairs are matched to each other, matches each boy to the next girl on the
list, closing the list under rotation. The $k$ pairs and the order among them are so chosen that 
the resulting matching is also stable; moreover, a rotation on a subset of these $k$ pairs, under
any ordering, leads to a matching that is not stable. 
Hence, a rotation can be viewed as a minimal
change to the current matching that results in a stable matching.
Rotations help traverse the lattice from the boy-optimal to the girl-optimal matching
along all possible paths available.

Birkhoff's \cite{Birkhoff} fundamental theorem for finite, distributive lattices shows that corresponding to each such lattice, $\Lc$, there is a partial order, say $\Pi$, such that the closed sets of $\Pi$ are isomorphic to the elements of $\Pi$. It turns out that for a lattice arising from a stable matching instance, the partial order $\Pi$ is defined on a set, say $R$, of rotations. Moreover, if $S$ is a closed set of $\Pi$, then starting in the lattice from the boy-optimal matching and applying the rotations in $S$ in any order consistent with a topological sort of $\Pi$, we will reach the stable matching corresponding to $S$.

Let $A$ and $B$ be two instances of stable matching over $n$ boys and $n$ girls, with sets of 
stable matchings $\Mc_A$ and $\Mc_B$, respectively, and lattices $\Lc_A$ and $\Lc_B$, respectively.
Then, it is easy to see that the matchings in $\Mc_A \cap \Mc_B$ form a sublattice in each of the
two lattices. Next assume that instance $B$ results from applying a shift operation, defined above,
to instance $A$. Then, we show that $\Mc_{AB} = \Mc_A \setminus \Mc_B$ is also a sublattice of $\Lc_A$ (Theorem \ref{cor:sublattice}). 
We use this fact crucially to show that there is at most one rotation, $\rin$, that leads from 
$\Mc_{A} \cap \Mc_{B}$ to $\Mc_{AB}$ and at most one rotation, $\rout$ that leads from 
$\Mc_{AB}$ to $\Mc_{A} \cap \Mc_{B}$ (Theorem \ref{thm:unique}).
Moreover, we can obtain efficiently this pair of rotations 
for each of the 
polynomially many instances that result from the polynomially many shifts (Proposition \ref{pro:computeRotation}).

Let $\Pi_A$ be the partial order for the lattice of instance $A$.
It is easy to see that a matching $M \in \Mc_A$, corresponding to a closed set $S$ of $\Pi_A$, 
is in $\Mc_B$ iff whenever $\rin \in S$, $\rout \in S$. 

In Section \ref{sec.lp} we give an integer program whose optimal solution
is a robust stable matching for the given probability distribution on shifts. The IP has
one indicator
variable, $y_{\rho}$, corresponding to each rotation $\rho$ in $\Pi_A$. The constraints of the program
ensure that the set $S$ of rotations that are set to 0 form a closed set. The rest of the 
constraints and the objective function ensure that the corresponding matching minimizes the
probability that $M$ is in $\Mc_A \setminus \Mc_B$.
We obtain the LP-relaxation of this IP and then obtain the dual LP. We interpret the latter as
solving a maximum circulation problem in a special network. This in turn is solvable as a max-flow problem on a network having $O(n^2)$ vertices and the solution yields an
integral optimal solution to the LP, hence yielding an efficient combinatorial  
algorithm for finding a robust stable matching.

In Section \ref{sec.rep}, we next study the set of robust stable matchings for instance $A$ under probability distribution $p$. We show that this set forms a sublattice of the lattice of stable matchings for $A$ (Lemma \ref{lem:sublattice}). We give an efficient algorithm for finding a succinct representation for this set. This representation has the property that any member of the set can be efficiently retrieved from it (Lemma \ref{lem:gen}).

\subsection{Related work}
\label{sec.related}

Aziz. et. al. \cite{aziz1} considered the problem of finding stable matching under uncertain linear preferences. They proposed three different uncertainty models:
\begin{enumerate}
\item Lottery Model: Each agent has a probability distribution
over strict preference lists, independent of other agents. 
\item Compact Indifference Model: Each agent has a single weak preference
list in which ties may exist. All linear order extensions of this
weak order have equal probability.
\item Joint Probability Model: A probability distribution over preference profiles
is specified.
\end{enumerate}
They showed that finding the matching with highest probability of begin stable is NP-hard for the Compact Indifference Model and the Joint Probability Model. For the very special case that preference lists of one gender are certain and the number of uncertain agents of the other gender are bounded by a constant, they gave a polynomial time algorithm that works for all three models.

The joint probability model is the most powerful and closest to our setting. The main difference is that in their model, there is no base instance, called $A$ in our model. The opportunity of finding new structural results arises from our model precisely because we need to consider two ``nearby'' instances, namely $A$ and the instance $B$ obtained by executing a shift.

Aziz. et. al. \cite{aziz2} introduced a pairwise probability model in which
each agent gives the probability of preferring one agent over another for all possible pairs.
They showed that the problem of finding a matching with highest probability of being stable is NP-hard even when no agent has a cycle in his/her certain preferences (i.e., the ones that hold with probability 1).

\subsubsection{A matter of nomenclature}
\label{sec.nomen}

Assigning correct nomenclature to a new issue under investigation is clearly critical for ease of 
comprehension. In this context we wish to mention that very recently,
Genc et. al. \cite{genc2} defined the notion of an $(a, b)$-supermatch as follows: this 
is a stable matching in which if any $a$ pairs break
up, then it is possible to match them all off by changing the partners
of at most $b$ other pairs, so the resulting matching is also stable.
They showed that it is NP-hard to decide if there is an
$(a, b)$-supermatch. They also gave a polynomial time algorithm for a very restricted version
of this problem, namely given a stable matching and a number $b$, decide if it is a 
$(1, b)$-supermatch. Observe that since the given instance may have exponentially many stable 
matchings, this does not yield a polynomial time algorithm even for deciding
if there is a stable matching which is a $(1, b)$-supermatch for a given $b$. 

Genc. et. al. \cite{genc1} also went on to defining the notion of the most robust stable matching, namely 
a $(1, b)$-supermatch where $b$ is minimum. We would like to point out that ``robust'' is a misnomer
in this situation and that the name ``fault-tolerant'' is more appropriate.
In the literature, the latter is used to
describe a system which continues to operate even in the event of failures and the former is 
used to describe a system which is able to cope with erroneous inputs, e.g., 
see the following pages from Wikipedia \cite{Robust, FT}.

	\section{Preliminaries}

\subsection{The stable matching problem}
The stable matching problem with incomplete preference lists takes as input a set of $n$ boys $B = \{b_1, b_2, \ldots , b_n\}$ and a set of $n$ girls $G = \{g_1, g_2, \ldots , g_n\}$ and a bipartite graph $G = (B, G, E)$, where $E$ is a set of edges connecting certain boys to certain girls. Each agent $v$ has a preference ranking the subset of opposite sex that are neighbors of $v$ in $G$. If $v$ cannot be paired to one of the neighbors, it prefers remaining unpaired rather than getting paired to any of the rest of the agents of opposite sex. The notation $b_i <_g b_j$ indicates that girl $g$ strictly prefers $b_j$ to $b_i$ in her preference list. Similarly, $g_i <_b g_j$ indicates that the boy $b$ strictly prefers $g_j$ to $g_i$ in his list.

A matching $M$ is a pairing of boys and girls so it is a maximum matching in $G$. Thus, if $G$ is the complete bipartite graph, $M$ will be a perfect matching in it. For each pair $bg \in M$, $b$ is called the partner of $g$ in $M$ (or $M$-partner), denoted by $p_M(g)$, and vice versa. For a matching $M$, a pair $bg \not \in M$ is said to be \emph{blocking} if $b$ and $g$ prefer each other to their partners in $M$; note that $b$ and/or $g$ need not be matched under $M$. Matching $M$ is \emph{stable} if there is no blocking pair w.r.t. $M$.

As stated in Section \ref{sec.overview}, from here on we will assume that $G$ is a complete bipartite graph. Finally, we will relax this assumption in Section \ref{sec.lp} to derive the more general result.

\subsection{The lattice of stable matchings}

Let $M$ and $M'$ be two stable matchings. We say that $M$ \emph{dominates} $M'$, denoted by 
$M \preceq M'$, if every boy weakly prefers his partner in $M$ to $M'$ (he either has the same partner or prefers his partner in $M$ to $M'$). It is well known that 
the dominance partial order over the set of stable matchings forms a 
distributive lattice \cite{GusfieldI}, with meet and join defined as follows.
The {\em meet} of $M$ and $M'$, $M \wedge M'$, 
is defined to be the matching that results when each boy chooses his more preferred partner 
from $M$ and $M'$; it is easy to show that this matching is also stable.
The {\em join} of $M$ and $M'$, $M \vee M'$, 
is defined to be the matching that results when each boy chooses his less preferred partner 
from $M$ and $M'$; this matching is also stable. These operations distribute, i.e.,
given three stable matchings $M, M', M''$,
$$ M \vee (M' \wedge M'') = (M \vee  M') \wedge (M \vee M'') \ \ \mbox{and} \ \
M \wedge (M' \vee M'') = (M \wedge M') \vee (M \wedge M'') .$$

It is easy to see that the lattice must contain a matching, $M_0$, that dominates all others
and a matching $M_z$ that is dominated by all others.
$M_0$ is called the \emph{boy-optimal matching}, since in it, each boy is matched to his most
favorite girl among all stable matchings. This is also the {\em girl-pessimal matching}.
Similarly, $M_z$ is the {\em boy-pessimal} or \emph{girl-optimal matching}.

\subsection{Rotations help traverse the lattice}
\label{sec.pre-rotations}

A crucial ingredient needed to understand the structure of stable matchings is the notion of 
a rotation, which was defined by Irving \cite{irving} and studied in detail in \cite{irving2}. 
A rotation takes $r$ matched pairs in a fixed order, say 
$\{b_0g_0, b_1g_1,\ldots, b_{r-1}g_{r-1}\}$ and ``cyclically'' changes the mates of these $2r$ 
agents, as defined below, to arrive at another stable matching. Furthermore, it represents a minimal
set of pairings with this property, i.e, if a cyclic change is applied on any subset of these 
$r$ pairs, with any ordering, then the resulting matching has a blocking pair and is not stable.
After rotation, the boys' mates weakly worsen and the girls' mates weakly improve. One can 
traverse from $M_0$ to $M_z$ by applying a suitable sequence of rotations, given by any topological sort of the rotation poset (defined below).

Let $M$ be a stable matching. For a boy $b$ let $s_M(b)$ denote the first girl $g$ on $b$'s list such that $g$ strictly prefers $b$ to her $M$-partner. Let $next_M(b)$ denote the partner in $M$ of girl $s_M(b)$. A \emph{rotation} $\rho$ \emph{exposed} in $M$ is an ordered list of pairs $\{b_0g_0, b_1g_1,\ldots, b_{r-1}g_{r-1}\}$ such that for each $i$, $0 \leq i \leq r-1$, $b_{i+1}$ is $next_M(b_i)$, where $i+1$ is taken modulo $r$. In this paper, we assume that the subscript is taken modulo $r$ whenever we mention a rotation. Notice that a rotation is cyclic and the sequence of pairs can be rotated. $M / \rho$ is defined to be a matching in which each boy not in a pair of $\rho$ stays matched to the same girl and each boy $b_i$ in $\rho$ is matched to $g_{i+1} = s_M(b_i)$. It can be proven that $M / \rho$ is also a stable matching. The transformation from $M$ to $M / \rho$ is called the \emph{elimination} of $\rho$ from $M$.

Let $\rho = \{b_0g_0, b_1g_1,\ldots, b_{r-1}g_{r-1}\}$ be a rotation. For $0 \leq i \leq r-1$, we say that $\rho$ \emph{moves $b_i$ from $g_i$ to $g_{i+1}$}, and \emph{moves $g_i$ from $b_{i}$ to $b_{i-1}$}. If $g$ is either $g_i$ or is strictly between $g_{i}$ and $g_{i+1}$ in $b_i$'s list, then we say that $\rho$ \emph{moves $b_i$ below $g$}. Similarly, $\rho$ \emph{moves $g_i$ above} $b$ if $b$ is $b_i$ or between $b_i$ and $b_{i-1}$ in $g_i$'s list.

\subsection{The rotation poset}

Let $A$ be an instance of stable matching problem and let $\Lc_A$ be the lattice of its stable matchings. Corresponding to lattice $\Lc_A$, there is a partial order $\Pi$ such that the closed subsets of $\Pi$ are in one-to-one correspondence with the matchings in $\Lc_A$; a \emph{closed subset} is a subset of the poset such that if an element is in the subset then all of its predecessors are also included. Since lattice $\Lc_A$ arises from a stable matching instance, it turns out that $\Pi$ is defined over a set of rotations. This partial order on rotations is called \emph{rotation poset}. Given a closed subset $C$, the corresponding matching $M$ is found by eliminating the rotations starting from $M_0$ according to the topological ordering of the elements in the subset. We say that $C$ \emph{generates} $M$. Let $C_1, C_2$ be closed subset generating $M_1$ and $M_2$. Then $C_1 \union C_2$, $C_1 \intersect C_2$ are closed subsets generating $M_1 \vee M_2$ and $M_1 \wedge M_2$ respectively.

\begin{lemma}[\cite{GusfieldI}, Lemma 3.2.1]
	\label{lem:pre2}
	For any boy $b$ and girl $g$, there is at most one rotation that moves $b$ to $g$, $b$ below $g$, or $g$ above $b$. Moreover, if $\rho_1$ moves $b$ to $g$ and $\rho_2$ moves $b$ from $g$ then $\rho_1 \prec \rho_2$.
\end{lemma}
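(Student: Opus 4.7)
The plan is to build on the monotonicity of boys' and girls' partners along rotation elimination sequences. The core fact I would use is that across all stable matchings of $A$, a given boy $b$'s set of partners forms a totally ordered chain in $b$'s preference list, and eliminating any rotation can only weakly worsen $b$'s partner; symmetrically, a girl $g$'s partners form a chain and can only weakly improve. Equivalently, in any topological ordering of the rotation poset, the sequence of partners of $b$ (respectively $g$) walks monotonically along this chain, with each rotation touching $b$ corresponding to exactly one step of the walk.

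For the first part, I would prove each uniqueness claim by isolating the transition a rotation induces on the relevant chain. If $\rho$ moves $b$ to $g$, then in the monotone sequence of $b$'s partners, $\rho$ is the unique step that brings $b$'s partner to $g$; two distinct such rotations would force $b$'s partner to equal $g$ at two non-adjacent points of the sequence, contradicting strict monotonicity between distinct partners. For ``$\rho$ moves $b$ below $g$'', I would observe that this is equivalent to saying $b$'s partner transitions from $\succeq_b g$ immediately before $\rho$ to $\prec_b g$ immediately after, and by monotonicity there is at most one such transition. The case ``$\rho$ moves $g$ above $b$'' is symmetric, using that $g$'s partner improves monotonically.

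For the second part I would argue directly in terms of closed subsets. Suppose $\rho_1$ moves $b$ to $g$ and $\rho_2$ moves $b$ from $g$. By the uniqueness just established, in the matching determined by any closed subset $C$, boy $b$ is matched to $g$ precisely when $\rho_1 \in C$ and $\rho_2 \notin C$. Now take any closed set $C$ with $\rho_2 \in C$, and consider a topological elimination order building $C$ from $\emptyset$. At the step immediately before $\rho_2$ is added, the current closed set $C^*$ does not contain $\rho_2$ but has $\rho_2$ exposed, so $b$ must be matched to $g$ in the matching of $C^*$. By the equivalence above, $\rho_1 \in C^* \subseteq C$. Thus every closed set containing $\rho_2$ contains $\rho_1$, i.e. $\rho_1 \prec \rho_2$ in the rotation poset.

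The main obstacle I anticipate is justifying cleanly that the pair-structure of $\rho_2$ -- specifically that $b$ appears paired with $g$ in its ordered list of pairs -- is intrinsic to $\rho_2$ and independent of which stable matching exposes it, so that ``$\rho_2$ is exposed at $M$'' really forces $p_M(b) = g$. Once this is granted (it follows from Irving's original analysis showing that rotation elimination deterministically updates partners, so the pairs constituting a rotation are invariant), all three uniqueness statements and the precedence $\rho_1 \prec \rho_2$ follow from the monotone-chain picture above.
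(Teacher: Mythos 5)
The paper does not prove this lemma at all---it is imported verbatim from Gusfield and Irving (\cite{GusfieldI}, Lemma 3.2.1), so there is no in-paper proof to compare against. Your reconstruction is correct and is essentially the standard argument from that source: uniqueness follows because every rotation occurs exactly once in any elimination sequence from $M_0$ to $M_z$ (the paper's Lemma~\ref{lem:seqElimination}) while $b$'s partner strictly worsens and $g$'s partner strictly improves at each rotation touching them, so each of the three events (``partner becomes $g$'', ``partner crosses below $g$'', ``partner crosses above $b$'') can happen at most once along the sequence; and the precedence $\rho_1 \prec \rho_2$ follows because $\rho_2$ can only be exposed at a matching with $p_M(b)=g$, which by uniqueness forces $\rho_1$ to lie in every closed set containing $\rho_2$. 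The one point you flagged as a possible obstacle---that the pair $bg$ is intrinsic to $\rho_2$ rather than dependent on the exposing matching---is indeed immediate from the definition of a rotation as a fixed ordered list of pairs, so your proof stands as written.
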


\begin{lemma}[\cite{GusfieldI}, Lemma 3.3.2]
	\label{lem:computePoset}
	$\Pi$ contains $O(n^2)$ rotations and can be computed in polynomial time.
\end{lemma}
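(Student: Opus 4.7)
The plan is to prove the two assertions separately: first the cardinality bound $|\Pi|=O(n^2)$ by a pair-counting argument based on the immediately preceding Lemma \ref{lem:pre2}, and then the polynomial-time construction by the standard Irving-style procedure of iteratively eliminating exposed rotations starting from the boy-optimal matching $M_0$.

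For the cardinality bound, I would associate each rotation $\rho=\{b_0g_0,\ldots,b_{r-1}g_{r-1}\}$ with its set of constituent pairs $\{(b_i,g_i)\}$, noting that any rotation contains $r\ge 2$ such pairs. Lemma \ref{lem:pre2} already asserts that for every boy-girl pair $(b,g)$ there is at most one rotation that moves $b$ from $g$. Hence distinct rotations have disjoint pair sets, and since there are at most $n^2$ boy-girl pairs, summing sizes gives $\sum_\rho r(\rho)\le n^2$. Combined with $r(\rho)\ge 2$, this yields $|\Pi|\le n^2/2=O(n^2)$.

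For the polynomial-time construction, the plan is: compute $M_0$ by the boy-proposing Gale--Shapley algorithm in $O(n^2)$ time; then, while the current stable matching $M$ is not the girl-optimal matching $M_z$, detect an exposed rotation by following the function $b\mapsto next_M(b)$ from each boy until a cycle is found (doable in $O(n^2)$ time), record this rotation $\rho$, and replace $M$ by $M/\rho$. Since each elimination strictly improves the partner of every girl involved, and a girl's partner can improve at most $n-1$ times, the loop terminates after $O(n^2)$ iterations, for total time $O(n^4)$. This enumerates every rotation in $\Pi$ (because each stable matching other than $M_0$ is obtained from some predecessor in $\Lc_A$ by eliminating an exposed rotation). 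To build the Hasse diagram, I would exploit the classical produces/destroys labeling: Lemma \ref{lem:pre2} says that for each pair $(b,g)$ at most one rotation creates it and at most one destroys it, and the second part of the same lemma states that whenever $\rho_1$ creates the pair that $\rho_2$ later destroys we have $\rho_1\prec\rho_2$. Collecting these covering relations across all polynomially many pairs gives a polynomial-time algorithm for the transitive reduction of $\Pi$.

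The main obstacle is verifying that the edges recovered from the produces/destroys labeling actually generate the correct partial order, i.e.\ that the closed sets of $\Pi$ are in bijection with $\Lc_A$ in the sense described in Section~\ref{sec.pre-rotations}. Enumeration is routine once Gale--Shapley and the cycle-chasing subroutine are in place, but justifying minimality of the recovered dependencies requires appealing to the uniqueness part of Lemma \ref{lem:pre2} together with the distributivity of $\Lc_A$, rather than any further combinatorial computation.
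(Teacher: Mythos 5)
This lemma is imported verbatim from Gusfield and Irving (their Lemma~3.3.2); the paper gives no proof of its own, so your proposal can only be judged against the standard argument. Your counting step is sound: a rotation containing the pair $(b,g)$ in particular moves $b$ below $g$, so the uniqueness in Lemma~\ref{lem:pre2} does force distinct rotations to have disjoint pair sets, and $\sum_\rho r(\rho)\le n^2$ with $r\ge 2$ gives the bound. The enumeration step is also fine, except that the completeness claim (``this enumerates every rotation'') should be justified by Lemma~\ref{lem:seqElimination} -- every rotation appears in \emph{any} elimination sequence from $M_0$ to $M_z$ -- rather than by the weaker observation that every stable matching is reachable from a predecessor.

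The genuine gap is in the construction of the order relation itself. You collect only the ``type 1'' precedences of Lemma~\ref{lem:pre2}: $\rho_1\prec\rho_2$ whenever $\rho_1$ moves $b$ to $g$ and $\rho_2$ moves $b$ from $g$. These edges do \emph{not} generate $\Pi$. The rotation poset also encodes a second, independent kind of constraint: if $\rho$ moves a boy $b$ from $g_i$ to $g_{i+1}$ and some girl $g$ lies strictly between them on $b$'s list (so $(b,g)$ is not a pair of $\rho$), then $\rho$ cannot be exposed until $g$ prefers her current partner to $b$, i.e.\ the unique rotation moving $g$ above $b$ must precede $\rho$. Such a predecessor need not share any pair with $\rho$, so it is not recoverable from produce/destroy chains. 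Omitting these edges yields a strictly weaker relation whose closed sets properly contain those of $\Pi$, destroying the bijection with $\Lc_A$ -- exactly the verification you flag as ``the main obstacle'' but do not resolve. The standard fix is Gusfield--Irving's two-rule rotation digraph (their Lemmas~3.2.6 and~3.3.2), which adds precisely these second-type edges, still has only $O(n^2)$ edges after sparsification, and provably has $\Pi$ as its transitive closure.
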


\begin{lemma}[\cite{GusfieldI}, Theorem 2.5.4]
	\label{lem:seqElimination}
	Every rotation appears exactly once in any sequence of elimination from $M_0$ to $M_z$.
\end{lemma}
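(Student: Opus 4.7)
The plan is to exploit the bijective correspondence, stated in the preliminaries, between the stable matchings in $\Lc_A$ and the closed subsets of the rotation poset $\Pi$. Under this bijection, $M_0$ corresponds to the empty closed subset $\emptyset$, and $M_z$ corresponds to the full closed subset $\Pi$ itself, since applying every rotation in any topological order starting from $M_0$ produces $M_z$.

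The crux of the argument I would first establish is the following equivalence: if $M$ is the stable matching generated by the closed subset $C \subseteq \Pi$, then the rotations exposed in $M$ are exactly the minimal elements of $\Pi \setminus C$, and moreover for any such minimal $\rho$ the matching $M/\rho$ is generated by $C \cup \{\rho\}$. The forward direction uses that if $\rho$ is exposed in $M$ then $M/\rho$ is a stable matching, so it must correspond to some closed subset $C'$; using Lemma~\ref{lem:pre2} (a rotation cannot be re-exposed once eliminated, since moving $b$ to $g$ strictly precedes moving $b$ off $g$), one argues $\rho \notin C$ and $C' = C \cup \{\rho\}$, whence closure forces all predecessors of $\rho$ to lie in $C$, i.e., $\rho$ is minimal in $\Pi \setminus C$. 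The reverse direction uses that if $\rho$ is minimal in $\Pi \setminus C$, then $C \cup \{\rho\}$ is closed, and generating its matching by a topological elimination that postpones $\rho$ to the last step shows that $\rho$ must be exposed in $M$.

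Granted this equivalence, a sequence of eliminations from $M_0$ to $M_z$ corresponds to a chain of closed subsets
\[
\emptyset = C_0 \subsetneq C_1 \subsetneq \cdots \subsetneq C_k = \Pi,
\]
with $C_{i+1} = C_i \cup \{\rho_{i+1}\}$, where $\rho_{i+1}$ is the rotation eliminated at step $i+1$. Each step adds exactly one new rotation, and no rotation added to some $C_i$ is ever exposed again (since it lies in every subsequent $C_j$, and exposed rotations must be in $\Pi \setminus C_j$). Hence $k = |\Pi|$ and $\rho_1, \ldots, \rho_k$ is a permutation of the elements of $\Pi$, proving that every rotation appears exactly once.

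The main obstacle in the plan is the equivalence in the middle paragraph: its forward direction really carries the content of the lemma, since it says that the lattice step $M \to M/\rho$ is already ``minimal'' in the closed-subset sense and cannot skip over any predecessor rotation. Once this is in hand, the counting argument in the last paragraph is immediate.
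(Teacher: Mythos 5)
The paper itself offers no proof of this lemma: it is imported verbatim from Gusfield and Irving as Theorem~2.5.4, so there is no in-paper argument to compare against. Judged on its own terms, your proposal has a genuine foundational problem: you derive the statement from the bijection between stable matchings and closed subsets of the rotation poset $\Pi$, together with Lemma~\ref{lem:pre2}, but in the standard development both of these are built \emph{on top of} Theorem~2.5.4. That theorem is precisely what certifies that the set of rotations encountered is independent of which elimination sequence from $M_0$ to $M_z$ one follows, so that a canonical rotation set exists on which a partial order can then be defined; the closed-subset correspondence (Theorem~3.2.2 in the source) and Lemma~\ref{lem:pre2} (Lemma~3.2.1 there) come later and use it. Invoking them here is circular unless you first re-establish them by an independent route, which you do not do.

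Separately, even granting the bijection, the step you yourself identify as carrying all the content --- that if $M$ corresponds to the closed set $C$ and $\rho$ is exposed in $M$, then $M/\rho$ corresponds to $C \cup \{\rho\}$ --- is only asserted. The bijection tells you that $M/\rho$, being stable, corresponds to \emph{some} closed set $C'$; nothing you cite forces $C' = C \cup \{\rho\}$ rather than an unrelated closed set, and pinning this down requires tracking how each agent's partner is determined by the rotations that move them, i.e., essentially redoing the direct argument. A non-circular proof works from first principles with the elimination process itself: one shows that if two distinct rotations $\rho$ and $\pi$ are both exposed in $M$, then $\rho$ is still exposed in $M/\pi$ (the confluence lemma, Lemma~2.5.3 in Gusfield--Irving); that an eliminated rotation can never be re-exposed because each boy only moves down his list; and that a maximal elimination sequence can only terminate at $M_z$. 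The ``exactly once'' claim then follows by a Newman-style confluence argument, with no appeal to the poset machinery.
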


	\section{Structural Results}
\label{sec:structure}

\subsection{The stable matchings in $\mathcal{M}_{AB}$ form a sublattice of $\Lc_A$}
 
Let $\mathcal{M}_{A}$ and $\mathcal{M}_{B}$ be the sets of all stable matchings under instance $A$ and $B$ respectively. Let $\mathcal{M}_{AB} = \mathcal{M}_{A} \setminus \mathcal{M}_{B}$. In other words, $\mathcal{M}_{AB}$ is the set of stable matchings in $A$ that become unstable in $B$. In this section we show that $\Mc_{AB}$ forms a lattice. 
We first prove a simple observation.

\begin{lemma} \label{lem:blockingPair}
	Let $M \in \Mc_{AB}$. The only blocking pair of $M$ under instance $B$ is $bg$.
\end{lemma}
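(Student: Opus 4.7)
The shift that turns $A$ into $B$ modifies the preference list of exactly one agent; by symmetry, assume it is girl $g$'s list, where boy $b$ is moved up past the block $b_1,\dots,b_k$. The plan is to observe that \emph{the only pairwise comparisons that change} between $A$ and $B$ are, in $g$'s list, the relative rank of $b$ versus each $b_i$ (all other comparisons, including the entirety of every other agent's list and $g$'s comparisons that don't involve $\{b\}$ vs.\ $\{b_1,\dots,b_k\}$, are identical). I will use this to rule out every candidate blocking pair except $bg$.

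First I would show $bg \notin M$. If $bg \in M$, then every comparison in $g$'s list that changed was between $p_M(g)=b$ and some $b_i$, and in each case $g$'s preference for $b$ over $b_i$ only strengthened going from $A$ to $B$. Hence no pair $(b',g)$ that failed to block in $A$ can start to block in $B$; for $g'\neq g$, preferences are unchanged, so $M$'s stability transfers from $A$ to $B$, contradicting $M\in\Mc_{AB}$.

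Next, take any blocking pair $(b',g')$ of $M$ under $B$. Since $M$ is stable in $A$, at least one of the two required comparisons (``$b'$ prefers $g'$ to $p_M(b')$'' or ``$g'$ prefers $b'$ to $p_M(g')$'') must differ between $A$ and $B$. The first comparison lives entirely in $b'$'s list, which is unchanged, so it must be the second. Therefore $g'=g$, and the unordered pair $\{b', p_M(g)\}$ must be one of the pairs whose ranking in $g$'s list was flipped by the shift, i.e.\ $\{b', p_M(g)\} = \{b, b_i\}$ for some $i \in \{1,\dots,k\}$. Either $b'=b$ and $p_M(g)=b_i$, giving the candidate pair $bg$; or $b'=b_i$ and $p_M(g)=b$, which is excluded by the previous step ($bg \notin M$). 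Thus the only possible blocking pair in $B$ is $bg$, and since $M\in\Mc_{AB}$ is not stable in $B$, at least one blocking pair must exist, so $bg$ is precisely that pair.

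The main subtlety, and the one place one has to be careful, is the elimination of the ``mirror'' case $b'=b_i$, $p_M(g)=b$; this is exactly what forces the preliminary step ruling out $bg\in M$. The rest is a direct comparison of which preference bits flip under a single upward shift, so no intricate calculation is needed. The argument when the shift is performed on a boy's list is symmetric with the roles of boys and girls swapped.
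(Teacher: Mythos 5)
Your proof is correct and follows essentially the same route as the paper's: both rest on the observation that the only pairwise comparisons that change under an upward shift are those between $b$ and $b_1,\dots,b_k$ in $g$'s list, and they change only in $b$'s favor, so any blocking pair of $M$ under $B$ other than $bg$ would already block under $A$. The one structural difference is your preliminary step showing $bg \notin M$; the paper avoids it by noting directly that for $x \neq b$, $x >^B_g x'$ implies $x >^A_g x'$ even when $x' = b$ (anyone above $b$ in $B$'s list was already above $b$ in $A$'s list), which disposes of your ``mirror'' case without the extra step.
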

\begin{proof}
	Since $M \not \in \mathcal{M}_B$, there must be a blocking pair $xy \not \in M$ under $B$. Assume $xy$ is not $bg$, we will show that $xy$ must also be a blocking pair in $A$. Let $y'$ be the partner of $x$ and $x'$ be the partner of $y$ in $M$. Since $xy$ is a blocking pair in $B$, $x >^B_y x'$ and $y >^B_x y'$. The preference list of $x$ remain unchanged from $A$ to $B$, so $y >^A_x y'$. Next, we consider two cases: 
	\begin{itemize}
		\item If $y$ is not $g$, the preference list of $y$ does not change. Therefore, $x >^A_y x'$, and hence, $xy$ is also a blocking pair in $A$.
		\item If $y$ is $g$, for all pairs $x,x'$ such that $x >^B_y x'$ and $x \not = b$, we also have $x >^A_y x'$. Therefore, $xy$ is a blocking pair in $A$.
	\end{itemize} 
	This contradicts the fact that $M$ is stable under $A$.
\end{proof}

Recall that $b_1 \geq_g b_2 \geq_g \ldots \geq_g b_k$ are $k$ boys right above $b$ in $g$'s list such that the position of $b$ is shifted up to be above $b_1$ in $B$. From Lemma~\ref{lem:blockingPair}, we can then characterize the set $\mathcal{M}_{AB}$.

\begin{lemma} \label{lem:characterize}
	$\mathcal{M}_{AB}$ is the set of all stable matchings in $A$ that match $g$ to a partner between $b_1$ and $b_k$ in $g$'s list, and match $b$ to a partner below $g$ in $b$'s list.
\end{lemma}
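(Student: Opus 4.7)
The plan is to reduce the lemma to a characterization of when $bg$ is a blocking pair of $M$ in $B$, using Lemma~\ref{lem:blockingPair} as the key tool. Since $\mathcal{M}_{AB} \subseteq \mathcal{M}_A$, and by Lemma~\ref{lem:blockingPair} every $M \in \mathcal{M}_{AB}$ has $bg$ as its unique blocking pair in $B$, we have: $M \in \mathcal{M}_{AB}$ iff $M \in \mathcal{M}_A$ and $bg$ blocks $M$ under $B$. So the whole task reduces to rewriting ``$bg$ blocks $M$ under $B$, given $M \in \mathcal{M}_A$'' in terms of the positions of $p_M(b)$ and $p_M(g)$.

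For the forward direction, I would start from the two ingredients of $bg$ being a blocking pair in $B$. Since $b$'s preference list is identical in $A$ and $B$, the condition $g >_b^B p_M(b)$ is equivalent to $p_M(b)$ lying below $g$ in $b$'s list, which gives one of the two required conditions directly. For the condition $b >_g^B p_M(g)$, I would use the explicit description of $g$'s list in $B$: the boys strictly above $b$ in $B$ are exactly those strictly above $b_1$ in $A$, so $b >_g^B p_M(g)$ holds iff $p_M(g) \in \{b_1,\dots,b_k\}$ or $p_M(g)$ is strictly below $b$ in $g$'s list under $A$.

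The key observation, and the one subtle step in the argument, is that the second alternative is impossible. If $p_M(g)$ were strictly below $b$ in $g$'s list under $A$, then combined with $p_M(b)$ being below $g$ in $b$'s list, the pair $bg$ would already be a blocking pair of $M$ under $A$, contradicting $M \in \mathcal{M}_A$. Hence $p_M(g) \in \{b_1,\dots,b_k\}$, i.e., $g$'s partner lies between $b_1$ and $b_k$ in her list, which is exactly the second required condition.

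The converse direction is a short verification: if $M \in \mathcal{M}_A$ with $p_M(g) \in \{b_1, \dots, b_k\}$ and $p_M(b)$ strictly below $g$ in $b$'s list, then in $B$ we have $g >_b^B p_M(b)$ (since $b$'s list is unchanged) and $b >_g^B p_M(g)$ (since $b$ sits immediately above $b_1,\dots,b_k$ in $B$), so $bg$ blocks $M$ in $B$, giving $M \notin \mathcal{M}_B$ and thus $M \in \mathcal{M}_{AB}$. No ingredient beyond Lemma~\ref{lem:blockingPair} and the explicit form of the shift is needed, so I expect no real obstacle; the only place one could slip is in handling the case where $p_M(g)$ is below $b$ in $A$, which the stability-of-$M$-in-$A$ argument disposes of cleanly.
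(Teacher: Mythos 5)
Your proposal is correct and follows essentially the same route as the paper: the easy direction is a direct verification that $bg$ blocks $M$ in $B$, and the harder direction invokes Lemma~\ref{lem:blockingPair} to pin the unique blocking pair down to $bg$ and then rules out $p_M(g)$ lying below $b_k$ by observing that this would make $bg$ a blocking pair already in $A$. No gaps.
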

\begin{proof}
	Assume $M$ is a stable matching in $A$ that contains $b_i g$ for $1 \leq i \leq k$ and $bg'$ such that $g >_b g'$. 
	In $B$,
	$g$ prefers $b$ to $b_i$, and hence $bg$ is a blocking pair. Therefore, $M$ is not stable under $B$ and $M \in \mathcal{M}_{AB}$.
	
	To prove the other direction, let $M$ be a matching in $\mathcal{M}_{AB}$. By Lemma~\ref{lem:blockingPair}, $bg$ is the only blocking pair of $M$ in $B$. 
	For that to happen, $p_M(b) <^B_b g$ and $p_M(g) <^B_g b$. We will show that $p_M(g) = b_i$ for $1 \leq i \leq k$. Assume not, then $p_M(g) <^B_g b_k$, and hence, $p_M(g) <^A_g b$. Therefore, $bg$ is a blocking pair in $A$, which is a contradiction. 
\end{proof}

Let $\Lc_{A}$ be the boy-optimal lattice formed by $\Mc_{A}$. 

\begin{theorem} \label{cor:sublattice}
	The set $\mathcal{M}_{AB}$ forms a sublattice of $\mathcal{L}_A$.
\end{theorem}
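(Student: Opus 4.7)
The plan is to leverage Lemma~\ref{lem:characterize}, which characterizes $\Mc_{AB}$ as exactly the set of $M \in \Mc_A$ satisfying two ``window'' conditions: (i) $p_M(g) \in \{b_1,\ldots,b_k\}$, and (ii) $p_M(b) <_b g$. Since $\Mc_{AB} \subseteq \Mc_A$ and $\Mc_A$ is already closed under the meet and join of $\Lc_A$, the only thing left to check is that these two conditions are preserved when we take the meet $M \wedge M'$ and the join $M \vee M'$ of two matchings $M, M' \in \Mc_{AB}$.

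First I would handle condition (ii), which concerns the boy $b$. Both $p_M(b)$ and $p_{M'}(b)$ lie strictly below $g$ in $b$'s preference list by assumption. Since the meet in $\Lc_A$ gives each boy his more preferred of the two partners and the join gives him his less preferred, in both cases $b$'s new partner is one of $\{p_M(b), p_{M'}(b)\}$, hence still strictly below $g$ in $b$'s list.

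Next I would handle condition (i), which concerns the girl $g$. By assumption $p_M(g), p_{M'}(g) \in \{b_1,\ldots,b_k\}$. Using the standard duality that the meet (resp.\ join) for boys is the join (resp.\ meet) for girls, in $M \wedge M'$ girl $g$ ends up with her less preferred of $\{p_M(g), p_{M'}(g)\}$, while in $M \vee M'$ she ends up with her more preferred of the two. Either way her new partner is one of the two original partners, and in particular still lies in $\{b_1,\ldots,b_k\}$.

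Combining these two observations, both $M \wedge M'$ and $M \vee M'$ satisfy conditions (i) and (ii), and they are stable in $A$ because $\Mc_A$ is a lattice. Thus by Lemma~\ref{lem:characterize} they lie in $\Mc_{AB}$, proving $\Mc_{AB}$ is a sublattice of $\Lc_A$. I do not anticipate a real obstacle here; the whole argument hinges on reading off Lemma~\ref{lem:characterize} and noting that ``partner lies in a fixed contiguous segment of one's preference list'' is preserved under pointwise $\min$/$\max$ of two partners, which is exactly what the lattice operations of $\Lc_A$ perform.
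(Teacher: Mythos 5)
Your proposal is correct and follows essentially the same route as the paper: both reduce the claim to Lemma~\ref{lem:characterize} and observe that under meet and join each of $b$ and $g$ receives one of their two original partners, so the two characterizing conditions are preserved. Your write-up is just a slightly more explicit version of the paper's argument.
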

\begin{proof}
	Assume $\mathcal{M}_{AB}$ is not empty. Let $M_1$ and $M_2$ be two matchings in $\mathcal{M}_{AB}$. By Lemma~\ref{lem:characterize}, $M_1$ and $M_2$ both match $g$ to a partner between $b_1$ and $b_k$ in $g$'s list, and match $b$ to a partner below $g$ in $b$'s list. Since $M_1 \wedge M_2$ is the matching resulting from having each boy choose the more preferred partner and each girl choose the least preferred partner, $M_1 \wedge M_2$ also belongs to the set characterized by Lemma~\ref{lem:characterize}. A similar argument can be applied to the case of $M_1 \vee M_2$.
	Therefore $\mathcal{M}_{AB}$ forms a sublattice of $\mathcal{L}_A$. 
\end{proof}

We will denote the lattice formed by $\mathcal{M}_{AB}$ as $\Lc_{AB}$.

	\subsection{Rotations going into and out of the sublattice $\Lc_{AB}$}

Let $M$ be a stable matching in $\Mc_A$ and $\rho$ be a rotation exposed in $M$ with respect to instance $A$. 
If $M \not \in \mathcal{S}$ and $M / \rho \in \mathcal{S}$ for a set $\mathcal{S}$ of stable matchings, we say that 
{\em $\rho$ goes into $\mathcal{S}$}.
Similarly, if $M \in \mathcal{S}$ and $M / \rho \not \in \mathcal{S}$, we say that 
{\em $\rho$ goes out of $S$}.
Let the set of all rotations going into $\mathcal{S}$ and out of $\mathcal{S}$ be $I_\mathcal{S}$ and $O_\mathcal{S}$, respectively. 

Let $\{b_{i_1}, \ldots b_{i_l}\}$ be the set of possible partners of $g$ in any stable matching in $\Mc_{AB}$, where $1 \leq i_1 \leq \ldots \leq i_l \leq k$. Let $\rho_1$ be a rotation moving $g$ to $b_{i_l}$, $\rho_2$ be the rotation moving $b$ below $g$ and $\rho_3$ be a rotation moving $g$ from $b_{i_1}$ (see \ref{sec.pre-rotations} for definitions). Note that each of $\rho_1, \rho_2$ and $\rho_3$ might not exist.

\begin{lemma} \label{lem:rotation}
	$I_{\Mc_{AB}}$ can only contain $\rho_1$, $\rho_2$. $O_{\Mc_{AB}}$ can only contain $\rho_3$.
\end{lemma}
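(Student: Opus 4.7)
The plan is to use the characterization from Lemma~\ref{lem:characterize}: $M\in\Mc_{AB}$ iff (P) $g$'s $M$-partner lies in $\{b_1,\ldots,b_k\}$ and (Q) $b$'s $M$-partner lies strictly below $g$ in $b$'s list. Two monotonicity facts drive the analysis: along any rotation, each boy's partner weakly worsens and each girl's weakly improves. Consequently (Q) is monotone---once it holds it is preserved---and whenever $g$ participates in $\rho$ her partner strictly improves.

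For $I_{\Mc_{AB}}$, let $\rho$ be exposed in $M\in\Mc_A\setminus\Mc_{AB}$ with $M/\rho\in\Mc_{AB}$. If $\neg Q(M)$, then $b$'s partner moves from at-or-above $g$ (in $M$) to strictly below $g$ (in $M/\rho$), so $\rho$ moves $b$ below $g$ in $b$'s list; by the uniqueness clause of Lemma~\ref{lem:pre2}, $\rho=\rho_2$. Otherwise $Q(M)$ holds and $\neg P(M)$. Letting $q$ denote $g$'s $M$-partner, $g$ participates in $\rho$ and by monotonicity $q$ is strictly less preferred than the new in-block partner $b_{i_j}$ she ends up with. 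The alternative $q$ above the block would force her new partner above the block as well, and $q=b$ would make $b$'s $M$-partner equal to $g$, violating $Q(M)$; so $q$ lies strictly below $b$ in $g$'s list, and in particular $q<_g b_{i_l}$.

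In this remaining case I claim $\rho=\rho_1$. Because the rotations that involve $g$ form a chain in $\Pi_A$ (the content of the ``$\rho_1\prec\rho_2$'' half of Lemma~\ref{lem:pre2} applied at each stable partner of $g$), I obtain the following key fact: for any rotation $\sigma$ moving $g$ to a partner $p'$, and any stable matching $M'$, $\sigma\in\mathrm{closed}(M')$ if and only if $g$'s $M'$-partner is at least as preferred by $g$ as $p'$. Apply this to $\rho_1$ (which moves $g$ to $b_{i_l}$). By Theorem~\ref{cor:sublattice} the sublattice $\Lc_{AB}$ has a boy-optimal element, call it $M_0^{AB}$, in which $g$ is matched to her worst partner in $\Mc_{AB}$, namely $b_{i_l}$. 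Since $M_0^{AB}\preceq M/\rho$ inside $\Lc_{AB}$ and hence inside $\Lc_A$, closed-set monotonicity yields $\rho_1\in\mathrm{closed}(M_0^{AB})\subseteq\mathrm{closed}(M/\rho)$. Since $q<_g b_{i_l}$, the key fact gives $\rho_1\notin\mathrm{closed}(M)$. Because $\mathrm{closed}(M/\rho)\setminus\mathrm{closed}(M)=\{\rho\}$, we conclude $\rho=\rho_1$.

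For $O_{\Mc_{AB}}$ the argument is symmetric and shorter. Preservation of (Q) forces $\neg P(M/\rho)$, so $g$ must change partner and, by monotonicity, end up at a boy strictly above $b_1\geq_g b_{i_1}$ in her list. Apply the same key fact to $\rho_3$ (which moves $g$ from $b_{i_1}$ to some strictly more preferred partner): in $M/\rho$, $g$'s partner is strictly more preferred than $b_{i_1}$, so $\rho_3\in\mathrm{closed}(M/\rho)$; in $M$, $g$'s partner $b_{i_j}$ is at most as preferred as $b_{i_1}$, so $\rho_3\notin\mathrm{closed}(M)$; hence $\rho=\rho_3$. The main technical obstacle is setting up the chain-based closed-set fact cleanly; with it in hand, everything else is routine case analysis driven by monotonicity and Lemma~\ref{lem:characterize}.
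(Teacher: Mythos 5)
Your proof is correct and follows essentially the same route as the paper's: a case analysis driven by Lemma~\ref{lem:characterize} together with the monotone movement of boys' and girls' partners under rotation elimination, plus the uniqueness statement of Lemma~\ref{lem:pre2} to identify $\rho_2$. The one place you go beyond the paper is in pinning down that the entering rotation must be precisely $\rho_1$ (the one moving $g$ to $b_{i_l}$) and the exiting one precisely $\rho_3$ (moving $g$ from $b_{i_1}$): the paper simply asserts these identifications, whereas your ``key fact'' about closed sets along the chain of rotations involving $g$, combined with the extreme elements of the sublattice $\Lc_{AB}$, supplies the justification explicitly --- a welcome tightening rather than a different approach.
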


\begin{proof}	 
	Consider a rotation $\rho \in I_{\Mc_{AB}}$. There exists $M \in \Mc_{A} \setminus \Mc_{AB}$ such that $M / \rho \in \Mc_{AB}$. 
	By Lemma~\ref{lem:characterize}, $M / \rho$ matches $g$ to a partner between $b_1$ and $b_k$ in $g$'s list, and matches $b$ to a partner below $g$ in $b$'s list.
	Moreover, $M$ either does not contain $b_ig$ for any $1 \leq i \leq k$, or contains $bg'$ where $g' \geq_{b} g$, or both.
	If $M$ does not contain $b_ig$ for any $1 \leq i \leq k$, then $\rho = \rho_1$. If $M$ contains $bg'$ where $g' \geq_{b} g$, then $\rho = \rho_2$.
	
	Consider a rotation $\rho \in O_{\Mc_{AB}}$. There exists $M \in \Mc_{AB}$ such that $M / \rho \in \Mc_{A} \setminus \Mc_{AB}$. 
	Again, by Lemma~\ref{lem:characterize}, $M$ contains $b_ig$ for $1\leq i \leq k$ and $bg'$ where $g' <_{b} g$. Since $M$ dominates $M / \rho$ in the boy optimal lattice, $b$ must prefer $g'$ to his partner in $M / \rho$.
	Hence, $M / \rho$ matches $b$  to a partner below $g$ in $b$'s list. 
	Therefore, $M / \rho$ must not contain $b_ig$ for any $1 \leq i \leq k$.
	It follows that $\rho$ must be $\rho_3$.
\end{proof}

\begin{lemma} \label{lem:rotationComparision}
	If both $\rho_1$ and $\rho_2$ exist then $\rho_1 \preceq \rho_2$.
\end{lemma}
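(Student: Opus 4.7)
The plan is to argue by contradiction using the boy-optimal matching of the sublattice $\Lc_{AB}$. First, let $M^{(0)}$ denote this matching; as the girl-pessimal matching of $\Lc_{AB}$, it pairs $g$ with her worst admissible partner $b_{i_l}$ and pairs $b$ with some girl strictly below $g$ in $b$'s list. Let $C^{(0)} \subseteq \Pi_A$ be the closed subset generating $M^{(0)}$; by Lemma~\ref{lem:pre2}, both $\rho_1$ and $\rho_2$ must lie in $C^{(0)}$, since they are uniquely determined by the partner transitions realized in reaching $M^{(0)}$ from $M_0$.

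Next, assume for contradiction that $\rho_1 \not\preceq \rho_2$ and fix a topological sort of $C^{(0)}$ in which $\rho_2$ precedes $\rho_1$. Let $M^{(1)}$ be the stable matching of $A$ generated by the (closed) prefix up through $\rho_2$. Then $b$'s partner in $M^{(1)}$ is strictly below $g$ in $b$'s list, while $\rho_1$ has not yet been applied. The key intermediate step is to show $p_{M^{(1)}}(g) <_g b_{i_l}$. Since rotations only improve $g$'s partner monotonically, and by Lemma~\ref{lem:pre2} the only rotations capable of lifting $g$'s partner to $b_{i_l}$ or strictly past it are $\rho_1$ itself and a (possibly nonexistent) ``skipping'' rotation $\rho_1'$ that moves $g$ strictly above $b_{i_l}$, it suffices to rule out $\rho_1' \in C^{(0)}$. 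This is the main obstacle, and I would handle it by noting that, by monotonicity, applying $\rho_1'$ in any topological sort of $C^{(0)}$ would leave $g$'s partner strictly above $b_{i_l}$ at the end, contradicting $p_{M^{(0)}}(g) = b_{i_l}$.

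Finally, knowing $p_{M^{(1)}}(g) <_g b_{i_l}$, I would finish with a three-way case split on where $p_{M^{(1)}}(g)$ sits in $g$'s $A$-list. If it equals $b_j$ for some $i_l < j \leq k$, then Lemma~\ref{lem:characterize} forces $M^{(1)} \in \Mc_{AB}$ and hence $b_j \in \{b_{i_1}, \ldots, b_{i_l}\}$, contradicting $j > i_l$. If it equals $b$, then $b$ would be matched to $g$, contradicting $p_{M^{(1)}}(b) <_b g$. If it lies strictly below $b$ in $g$'s list, then both $g >_b p_{M^{(1)}}(b)$ and $b >_g p_{M^{(1)}}(g)$ hold, so $(b,g)$ is a blocking pair of $M^{(1)}$ in $A$, contradicting stability. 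Since every branch yields a contradiction, $\rho_1 \preceq \rho_2$.
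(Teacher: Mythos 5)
Your proof is correct. It shares the paper's central move---assume for contradiction an elimination order in which $\rho_2$ is applied while $\rho_1$ has not yet been, and examine $g$'s partner in the stable matching obtained immediately after $\rho_2$---but it extracts the contradiction in a mirrored way. The paper uses stability of that matching together with the definition of $b_{i_l}$ to conclude that this partner is $b_{i_l}$ or higher, so that $\rho_1$ (which moves $g$ \emph{to} $b_{i_l}$ from strictly below) can never be exposed afterwards, violating Lemma~\ref{lem:seqElimination}. You instead use the absence of $\rho_1$ (and of any rotation moving $g$ above $b_{i_l}$) to conclude that the partner is strictly \emph{below} $b_{i_l}$, and your three-way case split then contradicts either the definition of $i_l$ via Lemma~\ref{lem:characterize} or the stability of $M^{(1)}$ in $A$; the ingredients are identical, just applied in the opposite direction, and your finish has the small advantage of not needing Lemma~\ref{lem:seqElimination}. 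Two remarks. First, the detour through $M^{(0)}$ and $C^{(0)}$ is sound but not needed: since $b_{i_l}$ is a stable partner of $g$, the rotation moving $g$ above $b_{i_l}$ (your $\rho_1'$) necessarily moves her \emph{from} $b_{i_l}$, so the second half of Lemma~\ref{lem:pre2} (in its girl form) already gives $\rho_1 \prec \rho_1'$, and the whole argument runs in an arbitrary elimination sequence of $\Pi_A$ rather than one confined to $C^{(0)}$. Second, you should make explicit that $p_{M_0}(g)$ is strictly below $b_{i_l}$ whenever $\rho_1$ exists (that rotation moves $g$ to $b_{i_l}$ from below, and her partner only improves along any sequence); this is the premise that lets ``monotonicity plus no lifting rotation applied yet'' actually pin $p_{M^{(1)}}(g)$ strictly below $b_{i_l}$.
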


\begin{proof}
Assume that $\rho_1 \not = \rho_2$ and there exists a sequence of rotation eliminations, from $M_0$ to a stable matching $M$ in which $\rho_2$ is exposed, that does not contain $\rho_1$.
Since $\rho_2$ moves $b$ below $g$,  $g$ is matched a partner higher than $b$ in her list in $M / \rho_2 $. Therefore, the partner can only be $b_{i_l}$ or a boy higher than $b_{i_l}$ in $g$'s list.

Consider any sequence of rotation eliminations from $M / \rho$ to $M_z$. 
In the sequence, the position of $g$'s partner can only go higher in her list.
Therefore, $\rho_1$ cannot be exposed in any matching in the sequence.
It follows that $\rho_1$ is not exposed in a sequence of eliminations from $M_0$ to $M_z$, which is a contradiction by Lemma~\ref{lem:seqElimination}.
\end{proof}

\begin{theorem} \label{thm:unique}
	There is at most one rotation in $I_{\Mc_{AB}}$ and at most one rotation in $O_{\Mc_{AB}}$. Moreover, the rotation in $I_{\Mc_{AB}}$ must be either $\rho_1$
	or $\rho_2$, and the rotation in $O_{\Mc_{AB}}$ must be $\rho_3$.
\end{theorem}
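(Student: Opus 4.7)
The plan is to combine Lemma~\ref{lem:rotation}, which already restricts $I_{\Mc_{AB}} \subseteq \{\rho_1, \rho_2\}$ and $O_{\Mc_{AB}} \subseteq \{\rho_3\}$, with the comparability $\rho_1 \preceq \rho_2$ from Lemma~\ref{lem:rotationComparision} and the uniqueness statement of Lemma~\ref{lem:pre2}, in order to rule out the possibility that $\rho_1$ and $\rho_2$ both lie in $I_{\Mc_{AB}}$. The bound $|O_{\Mc_{AB}}| \le 1$ and the identification of its only possible element as $\rho_3$ are then immediate from Lemma~\ref{lem:rotation}, so the work is entirely on the ``in'' side.

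The only nontrivial case is when $\rho_1$ and $\rho_2$ both exist and are distinct; if either fails to exist, or if $\rho_1 = \rho_2$, then $|I_{\Mc_{AB}}| \le 1$ follows directly from Lemma~\ref{lem:rotation}. In this remaining case I plan to show $\rho_1 \notin I_{\Mc_{AB}}$, which leaves $\rho_2$ as the only candidate.

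Assume for contradiction that $\rho_1 \in I_{\Mc_{AB}}$, and pick $M \in \Mc_A \setminus \Mc_{AB}$ with $\rho_1$ exposed in $M$ and $M/\rho_1 \in \Mc_{AB}$. Let $S$ be the closed subset of the rotation poset $\Pi_A$ corresponding to $M$; then $\rho_1 \notin S$ and the closed subset corresponding to $M/\rho_1$ is $S \cup \{\rho_1\}$. Applying Lemma~\ref{lem:characterize} to $M/\rho_1$, the partner of $b$ in $M/\rho_1$ lies strictly below $g$ in $b$'s list. Following $b$'s partner along any elimination sequence from $M_0$ to $M/\rho_1$, this partner begins weakly above $g$ (since existence of $\rho_2$ means that $b$ is paired weakly above $g$ in some stable matching, hence also in the boy-optimal $M_0$) and ends strictly below $g$, so some rotation in $S \cup \{\rho_1\}$ must move $b$ below $g$. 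By Lemma~\ref{lem:pre2} this rotation is unique and equal to $\rho_2$, so $\rho_2 \in S \cup \{\rho_1\}$. Since $\rho_1 \neq \rho_2$, we conclude $\rho_2 \in S$, and then closedness of $S$ together with $\rho_1 \preceq \rho_2$ forces $\rho_1 \in S$, contradicting $\rho_1 \notin S$.

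The step I expect to require the most care is the translation from the semantic condition ``$b$'s partner is strictly below $g$'' to the syntactic condition ``$\rho_2 \in S$'': this uses both the boy-pessimal monotonicity of $b$'s partner along rotation elimination and the uniqueness from Lemma~\ref{lem:pre2}, and one must check that the argument does not degenerate when $\rho_2$ fails to exist (which is precisely when $b$'s partner is already below $g$ in $M_0$, and hence in every stable matching). Once this translation is secured, the closedness of $S$ combined with $\rho_1 \preceq \rho_2$ from Lemma~\ref{lem:rotationComparision} forbids eliminating $\rho_1$ as the entry rotation into $\Mc_{AB}$, and the theorem follows.
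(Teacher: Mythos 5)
Your proof is correct and follows the paper's approach exactly: combine Lemma~\ref{lem:rotation} (which restricts the candidates to $\rho_1,\rho_2$ for $I_{\Mc_{AB}}$ and $\rho_3$ for $O_{\Mc_{AB}}$) with Lemma~\ref{lem:rotationComparision} to eliminate the possibility that both $\rho_1$ and $\rho_2$ serve as entry rotations. In fact your closed-set argument ($\rho_2 \in S$ would force $\rho_1 \in S$ by closedness, contradicting $\rho_1 \notin S$) correctly supplies the deduction that the paper compresses into a single ``hence.''
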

 
\begin{proof}
	By Lemma~\ref{lem:rotation},  $I_{\Mc_{AB}}$ can contain at most 2 rotations, namely $\rho_1$ and $\rho_2$ if they are distinct. By Lemma~\ref{lem:rotationComparision}, if both of them exist, $\rho_1 \preceq \rho_2$. Hence, $I_{\Mc_{AB}}$ can contain at most one rotation, and it is either $\rho_1$ or $\rho_2$.
	
	Again, by Lemma~\ref{lem:rotation}, $O_{\Mc_{AB}}$ can contain at most one rotation, namely $\rho_3$ if it exists.
\end{proof}

By Theorem~\ref{thm:unique}, there is at most one rotation $\rin$ coming into $\Mc_{AB}$ and at most one rotation $\rout$ coming out of $\Mc_{AB}$. 

\begin{proposition} \label{pro:computeRotation}
	$\rin$ and $\rout$ can be computed in polynomial time. 
\end{proposition}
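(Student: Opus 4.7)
My plan is to reduce the computation of $\rin$ and $\rout$ to polynomial-time traversals of the rotation poset $\Pi_A$ of instance $A$. By Lemma~\ref{lem:computePoset}, $\Pi_A$ can be built in polynomial time and contains only $O(n^2)$ rotations. By Theorem~\ref{thm:unique}, $\rin \in \{\rho_1, \rho_2\}$ and $\rout = \rho_3$ (whenever these rotations exist), so it suffices to compute $\rho_1, \rho_2, \rho_3$ individually.

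First I locate $\rho_2$. Lemma~\ref{lem:pre2} guarantees that at most one rotation moves $b$ below $g$, so I iterate through each rotation $\rho = (b'_0 g'_0, \ldots, b'_{r-1} g'_{r-1}) \in \Pi_A$, find the pair with $b'_i = b$ (if any), and test whether $g$ appears at $g'_i$ or strictly between $g'_i$ and $g'_{i+1}$ in $b$'s preference list; this matches the definition from Section~\ref{sec.pre-rotations}. At most one rotation passes the test, and it is returned as $\rho_2$.

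To identify $\rho_1$ and $\rho_3$ I first collect the rotations whose cycle contains $g$. Along any elimination sequence from $M_0$ to $M_z$, $g$'s partner strictly improves; combined with Lemma~\ref{lem:pre2} (applied to the unique rotation crossing each boy-position in $g$'s list) and Lemma~\ref{lem:seqElimination}, these rotations form a chain $\sigma_1 \prec \cdots \prec \sigma_m$ in $\Pi_A$ carrying $g$ through partners $b^{(0)} <_g b^{(1)} <_g \cdots <_g b^{(m)}$. For every $b^{(j)}$ lying in $\{b_1, \ldots, b_k\}$ I then check whether some $M \in \Mc_A$ simultaneously satisfies $p_M(g) = b^{(j)}$ and $p_M(b) <_b g$. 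This test is polynomial: the stable matchings in $\Mc_A$ containing the fixed pair $(b^{(j)}, g)$ form a sublattice $\Lc^{(j)}$ cut out by a forced and a forbidden rotation in $\Pi_A$; as one ascends $\Lc^{(j)}$, $b$'s partner worsens monotonically, so it suffices to compute the girl-optimal element of $\Lc^{(j)}$ by standard rotation-elimination techniques \cite{GusfieldI} and test whether $b$'s partner there is strictly below $g$ in $b$'s list.

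Once the realizable partners of $g$ in $\Mc_{AB}$ are enumerated, I take $b_{i_l}$ to be the worst (for $g$) and $b_{i_1}$ the best among them; the $g$-chain then yields $\rho_1$ (the unique rotation whose elimination moves $g$ to $b_{i_l}$) and $\rho_3$ (the unique rotation whose elimination moves $g$ from $b_{i_1}$), with uniqueness guaranteed by Lemma~\ref{lem:pre2}. If no realizable partner exists, I declare $\Mc_{AB}$ empty and report $\rin, \rout$ as absent. The principal obstacle is coordinating the joint constraint on $g$'s and $b$'s partners, since neither the $g$-chain nor the $b$-chain alone certifies membership in $\Mc_{AB}$; I handle this via a single extremal-element computation in each sublattice $\Lc^{(j)}$, which is the one nontrivial subroutine in the algorithm, with all other steps being direct lookups in $\Pi_A$.
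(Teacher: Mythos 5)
Your proposal is correct, and it is more explicit than the paper's own proof, which is quite terse: the paper simply asserts that since $\Pi_A$ is computable (Lemma~\ref{lem:computePoset}), ``each of $\rho_1,\rho_2,\rho_3$ can be computed efficiently,'' checks the possible partners of $b$ and $g$ with respect to $A$ to rule out the trivially empty cases, and then resolves the remaining ambiguity with a single poset comparison (if $\rho_2$ and $\rho_3$ both exist and $\rho_3 \preceq \rho_2$, then $\Mc_{AB}=\emptyset$; otherwise $\rin=\rho_2$, or $\rho_1$ if $\rho_2$ is absent, and $\rout=\rho_3$). The genuinely different part of your route is how you handle the chicken-and-egg issue that $\rho_1$ and $\rho_3$ are defined via the partners of $g$ realizable \emph{within} $\Mc_{AB}$, i.e.\ jointly with the constraint that $b$ is matched below $g$: you test each candidate $b^{(j)}$ by computing the girl-optimal element of the sublattice $\Lc^{(j)}$ of matchings containing $b^{(j)}g$ and exploiting monotonicity of $b$'s partner, whereas the paper sidesteps the enumeration entirely via the comparison $\rho_3 \preceq \rho_2$ in $\Pi_A$. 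Your version costs a few more sublattice computations but makes the correctness of the emptiness test and the identification of $b_{i_1}, b_{i_l}$ fully explicit; the paper's is leaner but leaves that realizability argument implicit. One small loose end: after computing $\rho_1$ and $\rho_2$ you still must decide which one is $\rin$ when both exist; by Lemma~\ref{lem:rotationComparision} we have $\rho_1 \preceq \rho_2$ and by Lemma~\ref{lem:subset} every matching in $\Mc_{AB}$ requires both, so the in-rotation is $\rho_2$ --- you should state this selection rule explicitly, since the downstream integer program needs the true $\rin$ and not merely some element of $\{\rho_1,\rho_2\}$.
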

\begin{proof}
	Since we can compute $\Pi_A$ efficiently according to Lemma~\ref{lem:computePoset}, each of $\rho_1$, $\rho_2$ and $\rho_3$ can be computed efficiently.
	
	First we can check possible partners of $b$ and $g$ with respect to instance $A$.
	By Lemma~\ref{lem:characterize}, $\Mc_{AB}$ is empty if none of the possible partners of $g$ is between $b_1$ and $b_k$ in $g$'s list or none of the partners of $b$ is below $g$ in $b$'s list. 
	It follows that both $\rin$ and $\rout$ do not exist.
	Hence we may assume that such a case does not happen. 
	
	Suppose $\rho_2$ exists. If $\rho_3$ exists and $\rho_3 \preceq \rho_2$, $\Mc_{AB} = \emptyset$. Otherwise, $\rin = \rho_2$, and $\rout = \rho_3$ if $\rho_3$ exists.  
	
	Suppose $\rho_2$ does not exist. If $\rho_1$ exists, $\rin = \rho_1$. If $\rho_3$ exists, $\rout = \rho_3$.
\end{proof}

\begin{lemma} \label{lem:subset}
	Let $M$ be a matching in $\Mc_{AB}$ and $S$ be the corresponding closed subset in $\Pi_{A}$. If $\rho_1$ exists, $S$ must contain $\rho_1$. If $\rho_2$ exists, $S$ must contain $\rho_2$. If $\rho_3$ exists, $S$ must not contain $\rho_3$.
\end{lemma}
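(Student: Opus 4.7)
My plan is to build each of the three claims on two basic ingredients: first, the monotonicity of rotation eliminations (as the rotations in $S$ are eliminated in some topological order to take $M_0$ to $M$, every boy's partner weakly worsens and every girl's partner weakly improves); and second, Lemma~\ref{lem:pre2}, which tells us that the rotation moving a particular boy $b$ to a particular girl $g$, or moving $b$ below $g$, is unique if it exists. Combined with Lemma~\ref{lem:characterize}, which pins down exactly where $b$ and $g$ sit in each $M \in \Mc_{AB}$, this will give all three assertions.

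For the $\rho_1$ claim: Lemma~\ref{lem:characterize} forces $g$'s partner in $M$ to lie in $\{b_{i_1},\ldots,b_{i_l}\}$, i.e., at least as preferred as $b_{i_l}$. On the other hand, if $\rho_1$ exists then (by its defining action) there is a stable matching in which $g$ is matched strictly below $b_{i_l}$ in her list just before $\rho_1$ is applied; girl-pessimality of $M_0$ then implies $g$'s partner in $M_0$ is also strictly below $b_{i_l}$. I would then invoke the standard fact — an immediate consequence of the ``moreover'' part of Lemma~\ref{lem:pre2} — that the rotations involving $g$ form a chain in $\Pi_A$, so the subset of them lying in $S$ is a prefix of this chain. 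For $g$'s partner in $M$ to improve from strictly below $b_{i_l}$ to at least as preferred as $b_{i_l}$, the unique rotation moving $g$ to $b_{i_l}$, namely $\rho_1$, must appear in this prefix, so $\rho_1 \in S$.

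For the $\rho_3$ claim I would argue by contradiction: if $\rho_3 \in S$, then after $\rho_3$ (and its predecessors in $S$) have been eliminated, $g$'s partner is strictly above $b_{i_1}$ in her list; monotonicity then forces $g$'s partner in $M$ to remain strictly above $b_{i_1}$, contradicting Lemma~\ref{lem:characterize}. For the $\rho_2$ claim, which is symmetric to that for $\rho_1$ but tracks $b$ instead of $g$: Lemma~\ref{lem:characterize} says $b$'s partner in $M$ is strictly below $g$; if $\rho_2$ exists then (since otherwise no rotation could ever bring $b$ strictly below $g$) some stable matching has $b$'s partner at or above $g$, so by boy-optimality $b$'s partner in $M_0$ is at or above $g$ as well. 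The unique rotation crossing $b$'s partner from at-or-above $g$ to strictly below $g$ is $\rho_2$ (Lemma~\ref{lem:pre2}), hence $\rho_2 \in S$.

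The main subtlety I anticipate is correctly relating the \emph{existence} of $\rho_1$ or $\rho_2$ to the \emph{starting} position of $g$ or $b$ in $M_0$; this uses the extremal (pessimal/optimal) property of $M_0$ together with the defining action of the rotation. Once this is set up cleanly, each claim reduces to tracking a single agent's partner along the sequence and invoking uniqueness from Lemma~\ref{lem:pre2}.
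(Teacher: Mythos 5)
Your proof is correct and follows essentially the same route as the paper's: use Lemma~\ref{lem:characterize} to pin down where $b$ and $g$ sit in $M$, observe that the defining action of each $\rho_i$ (together with the extremal position of $M_0$) fixes where they start, and conclude via the monotone movement of partners and the uniqueness guarantee of Lemma~\ref{lem:pre2}. The only difference is that you spell out the justification for the starting positions in $M_0$, which the paper's terser proof leaves implicit.
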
	

\begin{proof}
	If $\rho_1$ exists, $M_0$ does not contain $b_ig$ for any $i \in [1,k]$. Since $M \in \Mc_{AB}$, by Lemma~\ref{lem:characterize} $M$ matches $g$ to a boy between $b_1$ and $b_k$ in her list. The set of rotations eliminated from $M_0$ to $M$ must include $\rho_1$.
	
	If $\rho_2$ exists, $b$ can not be below $g$ in $M_0$. Since $b$ is below $g$ in $M$, by Lemma~\ref{lem:characterize} the set of rotations eliminated from $M_0$ to $M$ must include $\rho_2$.
	
	Assume that $\rho_3$ exists and $S$ contains $\rho_3$. Since $\rho_3$ moves $g$ up from $b_{i_1}$, $M$ can not contain $b_ig$ for any $i \in [1,k]$. This is a contradiction.
\end{proof}

\subsection{The rotation poset for the sublattice $\Lc_{AB}$}

From the previous section we know that $\Lc_{AB}$ is a sublattice of $\Lc_A$. 
In this section we give the rotation poset that generates all stable matchings in this sublattice.

We may assume that $M_{AB} \not = \emptyset$. If $\rin$ exists, let $\Pi_{\text{in}} = \{ \rho \in \Pi_A: \rho \preceq \rin \}$ and $\Mb$ be the matching generated by $\Pi_{\text{in}}$. Otherwise, let $\Mb = M_0$. Similarly, let $\Mg$ be the matching generated by $\Pi_A \setminus \Pi_{\text{out}}$, where $\Pi_{\text{out}} = \{ \rho \in \Pi_A: \rho \succeq \rout \}$, if $\rout$ exists, and $\Mg = M_z$ otherwise. 

\begin{lemma}
	$\Mb$ is the boy-optimal matching in $\Mc_{AB}$, and $\Mg$ is the girl-optimal matching in $\Mc_{AB}$.
\end{lemma}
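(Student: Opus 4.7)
The plan is to use the bijection between stable matchings in $\Lc_A$ and closed subsets of the rotation poset $\Pi_A$, under which $S_1 \subseteq S_2$ iff $M(S_1) \preceq M(S_2)$ in the boy-optimal order. I will show that $\Pin$ is the unique minimum, and $\Pi_A \setminus \Pout$ the unique maximum, among closed subsets $S$ with $M(S) \in \Mc_{AB}$. Since $\Mb$ and $\Mg$ are, by definition, the matchings generated by these two subsets, this identifies them respectively with the boy-optimal and girl-optimal elements of the sublattice $\Mc_{AB}$.

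Since $\Mc_{AB}$ is a sublattice by Theorem~\ref{cor:sublattice} and is nonempty by assumption, let $\Mb^*$ and $\Mg^*$ denote its boy- and girl-optimal elements. Assume first that $\rin$ exists. By Lemma~\ref{lem:subset}, $\rin \in S(\Mb^*)$, and closedness gives $\Pin \subseteq S(\Mb^*)$. Suppose the inclusion is strict and pick $\rho$ maximal in $S(\Mb^*) \setminus \Pin$. Then $\rho$ is in fact maximal in all of $S(\Mb^*)$: any $\rho' \succ \rho$ with $\rho' \in S(\Mb^*)$ would lie in $\Pin$ (by maximality of $\rho$ in the complement), forcing $\rho \prec \rho' \preceq \rin$ and hence $\rho \in \Pin$, a contradiction. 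Thus $S' := S(\Mb^*) \setminus \{\rho\}$ is closed; setting $M' := M(S')$ we have $\Mb^* = M'/\rho$. Since $\rho \neq \rin$ and $I_{\Mc_{AB}} \subseteq \{\rin\}$ by Theorem~\ref{thm:unique}, $\rho$ does not go into $\Mc_{AB}$; combined with $M'/\rho = \Mb^* \in \Mc_{AB}$, this forces $M' \in \Mc_{AB}$. But $S' \subsetneq S(\Mb^*)$ gives $M' \prec \Mb^*$, contradicting boy-optimality of $\Mb^*$. Hence $S(\Mb^*) = \Pin = S(\Mb)$ and $\Mb = \Mb^*$.

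The argument for $\Mg = \Mg^*$ is dual. Lemma~\ref{lem:subset} gives $\rout \notin S(\Mg^*)$, so by closedness $S(\Mg^*) \subseteq \Pi_A \setminus \Pout$. If this is strict, pick $\rho$ minimal in $(\Pi_A \setminus \Pout) \setminus S(\Mg^*)$; every predecessor $\rho' \prec \rho$ lies in $\Pi_A \setminus \Pout$ (otherwise $\rho \succeq \rho' \succeq \rout$ would put $\rho \in \Pout$) and hence in $S(\Mg^*)$ by the minimal choice of $\rho$, so $S(\Mg^*) \cup \{\rho\}$ is a closed subset still disjoint from $\Pout$. Since $\rho \neq \rout$ and $O_{\Mc_{AB}} \subseteq \{\rout\}$ by Theorem~\ref{thm:unique}, the generated matching lies in $\Mc_{AB}$ and strictly succeeds $\Mg^*$ in the boy-optimal order, contradicting girl-optimality of $\Mg^*$. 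Finally, the degenerate cases: if $\rin$ does not exist and $M_0 \notin \Mc_{AB}$, then any elimination sequence from $M_0$ to a matching in $\Mc_{AB}$ would yield a first rotation in $I_{\Mc_{AB}}$, a contradiction; so $M_0 = \Mb \in \Mc_{AB}$ is boy-optimal in it. The case where $\rout$ is absent is symmetric and gives $M_z = \Mg$.

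The main obstacle I anticipate is the maximality-transfer step in the boy-optimal argument (and the dual predecessor argument for the girl-optimal side): showing that a maximal element of $S(\Mb^*) \setminus \Pin$ is actually maximal in the full $S(\Mb^*)$, so that its removal preserves closedness. This is what lets Theorem~\ref{thm:unique} kick in, since only $\rin$ enters $\Mc_{AB}$ and only $\rout$ leaves it, so removing any other rotation cannot push us out of $\Mc_{AB}$. The structural fact being exploited is that $\Pin$ is the principal down-set generated by $\rin$, and dually $\Pi_A \setminus \Pout$ is the complement of the principal up-set of $\rout$.
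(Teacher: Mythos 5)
Your proof is correct, but it takes a genuinely different route from the paper's. The paper argues directly: for \emph{any} $M \in \Mc_{AB}$ with closed subset $S$, Lemma~\ref{lem:subset} gives $\rin \in S$ and $\rout \notin S$, whence closedness forces $\Pin \subseteq S \subseteq \Pi_A \setminus \Pout$ and therefore $\Mb \preceq M \preceq \Mg$ — about four lines, using only Lemma~\ref{lem:subset} and the order-isomorphism between closed subsets and matchings. You instead start from the extremal elements $\Mb^*, \Mg^*$ of the sublattice (whose existence comes from Theorem~\ref{cor:sublattice}) and pin down their closed subsets exactly, via a contradiction argument that peels off a maximal rotation outside $\Pin$ (resp.\ adjoins a minimal one outside $S(\Mg^*)$) and invokes Theorem~\ref{thm:unique} to conclude the perturbed matching stays in $\Mc_{AB}$. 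Your maximality-transfer step and its dual are both sound, as is the handling of the degenerate cases. What your longer route buys is worth noting: it actually establishes that $\Mb$ and $\Mg$ \emph{belong} to $\Mc_{AB}$ (since they are shown equal to $\Mb^*$ and $\Mg^*$, which are in $\Mc_{AB}$ by definition), whereas the paper's proof only establishes that $\Mb$ dominates and $\Mg$ is dominated by every element of $\Mc_{AB}$, leaving membership implicit. What it costs is the extra machinery: you need Theorem~\ref{thm:unique}, the fact that deleting a maximal element of a closed set corresponds to undoing an exposed rotation, and a separate treatment of the cases where $\rin$ or $\rout$ is absent, none of which the paper's containment argument requires.
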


\begin{proof}
	Let $M$ be a matching in $\Mc_{AB}$ generated by a closed subset $S \subseteq \Pi_A$. By Lemma~\ref{lem:subset}, if $\rin$ exists, $S$ must contain $\rin$. Since $\Pi_{\text{in}}$ is the minimum set containing $\rin$, $\Pi_{\text{in}} \subseteq S$. Therefore, $\Mb \preceq M$.
	
	To prove that $M \preceq \Mg$, we show $S \subseteq \Pi_A \setminus \Pi_{\text{out}}$. Assume otherwise, then there exists a rotation $\rho \in S$ such that $\rho \not \in \Pi_A \setminus \Pi_{\text{out}}$. It follows that $\rho \in \Pi_{\text{out}}$, and hence $\rho \succeq \rout$. Since $S$ contains $\rho$ and $S$ is a closed subset, $S$ must also contain $\rout$. This is a contradiction by Lemma~\ref{lem:subset}.
\end{proof}

\begin{theorem}
	\label{lem:rhoAB}
	$\Pi_{AB} = \Pi_A \setminus \left( \Pin \cup \Pout \right) $ is the rotation poset generating $\Lc_{AB}$.
\end{theorem}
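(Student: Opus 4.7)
The plan is to use the Birkhoff correspondence in two stages: first identify which closed subsets of $\Pi_A$ correspond to matchings in $\Mc_{AB}$, then show these are in lattice-isomorphic bijection with the closed subsets of $\Pi_{AB}$ under its inherited partial order.

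First, I will show $\Mc_{AB} = [\Mb, \Mg]$, the interval in $\Lc_A$. The containment $\Mc_{AB} \subseteq [\Mb, \Mg]$ is the content of the previous lemma. For the reverse inclusion, I take $M \in [\Mb, \Mg]$ and check the conditions of Lemma~\ref{lem:characterize}. Since $b$'s partner in both $\Mb$ and $\Mg$ lies below $g$ in $b$'s list, and $\Mb \preceq M$ forces $b$ to weakly prefer $p_{\Mb}(b)$ to $p_M(b)$, we get the chain $p_M(b) \leq_b p_{\Mb}(b) <_b g$, placing $p_M(b)$ below $g$. For $g$, the reversed monotonicity of girls' preferences along the dominance order gives $b_{i_l} = p_{\Mb}(g) \leq_g p_M(g) \leq_g p_{\Mg}(g) = b_{i_1}$. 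Because $b_1, \ldots, b_k$ occupy consecutive positions in $g$'s list, the only boys lying between $b_{i_1}$ and $b_{i_l}$ in her list are $b_{i_1}, b_{i_1+1}, \ldots, b_{i_l} \subseteq \{b_1, \ldots, b_k\}$, so $p_M(g) \in \{b_1, \ldots, b_k\}$, establishing $M \in \Mc_{AB}$.

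Next, I will translate the interval into a statement about closed subsets of $\Pi_A$. Via Birkhoff, the matchings in $[\Mb, \Mg]$ correspond exactly to closed subsets $S \subseteq \Pi_A$ satisfying $\Pin \subseteq S \subseteq \Pi_A \setminus \Pout$, the containment conditions being vacuous when the corresponding rotation does not exist. Every such $S$ decomposes uniquely as $S = \Pin \cup T$ with $T \subseteq \Pi_{AB} = \Pi_A \setminus (\Pin \cup \Pout)$, and the map $T \mapsto \Pin \cup T$ clearly commutes with union and intersection.

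The main technical step is to verify that $S = \Pin \cup T$ is closed in $\Pi_A$ iff $T$ is closed in $\Pi_{AB}$ under the inherited order. The crucial observation is that $\Pout = \{\rho : \rho \succeq \rout\}$ is upward closed in $\Pi_A$, so no element of $\Pi_{AB}$ can have a $\Pi_A$-predecessor in $\Pout$ (otherwise that element would itself be in $\Pout$). Hence predecessors of elements of $T$ land in $\Pin \cup \Pi_{AB}$: those in $\Pin$ are automatically contained in $S$, while those in $\Pi_{AB}$ are handled by closedness of $T$ inside $\Pi_{AB}$. Predecessors of elements of $\Pin$ stay in $\Pin$ because $\Pin$ is a principal down-set. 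Composing the bijection $T \mapsto \Pin \cup T$ with the Birkhoff correspondence then yields a lattice isomorphism from the closed subsets of $\Pi_{AB}$ to $\Lc_{AB}$, proving that $\Pi_{AB}$ is the rotation poset generating $\Lc_{AB}$. I expect the first step to be the main obstacle, as it is the only part that uses structural information specific to shifts — namely that $b_1, \ldots, b_k$ are consecutive in $g$'s list — rather than generic lattice machinery.
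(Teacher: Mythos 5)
Your proposal is correct, and it is organized differently from---and in one respect more complete than---the paper's own proof. The paper argues essentially in one direction: it takes $M \in \Mc_{AB}$ with closed set $S$, puts $S' = S \setminus \Pin$, uses Lemma~\ref{lem:subset} to conclude $S'$ avoids $\Pout$ and hence sits inside $\Pi_{AB}$, and observes that eliminating $\Pin$ first (which yields $\Mb$) and then $S'$ recovers $M$. It does not explicitly verify surjectivity, i.e., that every closed subset $T$ of $\Pi_{AB}$ produces, via $\Pin \cup T$, a matching that actually lies in $\Mc_{AB}$; this is precisely where your interval identity $\Mc_{AB} = [\Mb, \Mg]$ does the work. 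Your proof of the reverse inclusion $[\Mb,\Mg] \subseteq \Mc_{AB}$ is sound: dominance makes boys' partners weakly worse and girls' partners weakly better as one moves up the lattice, and since $b_1,\dots,b_k$ occupy consecutive positions in $g$'s list, every boy between $p_{\Mb}(g)=b_{i_l}$ and $p_{\Mg}(g)=b_{i_1}$ is one of $b_1,\dots,b_k$, so Lemma~\ref{lem:characterize} applies. Likewise, your verification that $\Pin \cup T$ is closed in $\Pi_A$ exactly when $T$ is closed in the induced order on $\Pi_{AB}$ (using that $\Pout$ is an up-set and $\Pin$ a principal down-set) is a step the paper leaves implicit. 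The one small thing worth adding to match the paper's operational notion of ``generating'' is the remark that, because $\Pin$ is a closed subset generating $\Mb$, the matching corresponding to $T$ is obtained by eliminating the rotations of $T$ starting from $\Mb$ in topological order; this follows immediately from your decomposition $S = \Pin \cup T$.
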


\begin{proof}
	Let $M$ be a matching in $\Mc_{AB}$ generated by a closed subset $S \subseteq \Pi_A$. Let $S' = S \setminus \Pi_{\text{in}}$.
	We show that $S'$ is a closed subset of $\Pi_{AB}$ and
	eliminating the rotations in $S'$ starting from $\Mb$ according to the topological ordering of the elements gives $M$.
	 
	First $S' \cap \Pi_{\text{in}} = \emptyset$ trivially. Since $M \in \Mc_{AB}$, $S$ does not contain $\rout$ by Lemma~\ref{lem:subset}. Therefore, $S'$ does not contain $\rout$, and $S' \cap \Pi_{\text{out}} = \emptyset$. It follows that $S'$ is a closed subset of $\Pi_{AB}$.
	
	Next observe that we can eliminate rotations in $S$ from $M_0$ by eliminating rotations in $\Pin$ first and then eliminating rotations in $S \setminus \Pin$. This can be done because $\Pin$ is a closed subset of $\Pi_A$. Since $\Pin$ generates $M$, the lemma follows.
\end{proof}

Finally, we observe that the results stated above also follow when we make an upward shift in a boy's list.

\begin{lemma}
	 Let $A$ be an instance of stable matching, and $B$ be another instance obtained by introducing a shift in the list of a boy in instance $A$. Then there is at most one rotation, 
	 $\rin$, that leads from 
	 $\Mc_{A} \cap \Mc_{B}$ to $\Mc_{AB}$ and at most one rotation, 
	 $\rout$ that leads from 
	 $\Mc_{AB}$ to $\Mc_{A} \cap \Mc_{B}$.
\end{lemma}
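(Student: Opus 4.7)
The plan is to prove boy-shift analogs of Lemmas~\ref{lem:blockingPair}, \ref{lem:characterize}, \ref{lem:rotation}, \ref{lem:rotationComparision}, and Theorem~\ref{thm:unique}, following the girl-shift arguments step by step with the roles of $b$ and $g$ interchanged. Suppose $B$ is obtained from $A$ by shifting $g$ upward past the consecutive block $g_1 \geq_b g_2 \geq_b \ldots \geq_b g_k$ in boy $b$'s preference list. The analog of Lemma~\ref{lem:blockingPair} is immediate: the unique blocking pair of any $M \in \Mc_{AB}$ under $B$ is $bg$, since only $b$'s list has been altered and only in its relative ordering of $g$ against $\{g_1,\ldots,g_k\}$. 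This yields the analog of Lemma~\ref{lem:characterize}: $\Mc_{AB}$ is exactly the set of stable matchings of $A$ in which $b$'s partner lies in $\{g_1,\ldots,g_k\}$ \emph{and} $g$'s partner is strictly below $b$ in her list. Since this is an interval condition on $b$'s side plus a threshold condition on $g$'s side, both preserved by the meet and join of $\Lc_A$, $\Mc_{AB}$ is again a sublattice of $\Lc_A$, giving the analog of Theorem~\ref{cor:sublattice}.

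Let $\{g_{j_1},\ldots,g_{j_m}\} \subseteq \{g_1,\ldots,g_k\}$ be the possible partners of $b$ across matchings in $\Mc_{AB}$, ordered so that $g_{j_1} \geq_b \ldots \geq_b g_{j_m}$. I would then define $\rho_1'$ as the rotation moving $b$ to $g_{j_1}$, $\rho_2'$ as the rotation moving $g$ above $b$, and $\rho_3'$ as the rotation moving $b$ from $g_{j_m}$; each is unique when it exists by Lemma~\ref{lem:pre2}. Because rotations weakly worsen boys' partners and weakly improve girls', the interval condition on $b$'s side can first become satisfied only via $\rho_1'$ (entering the block from above) and first fail only via $\rho_3'$ (exiting from below), while the threshold condition on $g$'s side, once violated via $\rho_2'$, remains violated thereafter. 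Imitating the proof of Lemma~\ref{lem:rotation} then gives $I_{\Mc_{AB}} \subseteq \{\rho_1'\}$, so $|I_{\Mc_{AB}}| \le 1$ immediately, and $O_{\Mc_{AB}} \subseteq \{\rho_2', \rho_3'\}$.

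The main obstacle is collapsing $|O_{\Mc_{AB}}| \le 1$ via an analog of Lemma~\ref{lem:rotationComparision}: if both $\rho_2'$ and $\rho_3'$ exist and are distinct, then they are comparable in $\Pi_A$, so at most one of them is exposed from within $\Mc_{AB}$. Paralleling the original argument, I would assume for contradiction a sequence of eliminations from $M_0$ to $M_z$ that exposes one of $\rho_2', \rho_3'$ while the other has not yet been eliminated, and use the stability of every intermediate matching (with respect to $A$) together with the monotonicity of partners under rotation and the uniqueness statements of Lemma~\ref{lem:pre2} to force the missing rotation never to be exposable in the remainder of the sequence, contradicting Lemma~\ref{lem:seqElimination}. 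Determining the correct direction of the comparison, and making the stability-based forcing argument go through in this reversed setting, is the most delicate step; once it is in hand, the argument of Theorem~\ref{thm:unique} proceeds exactly as in the girl-shift case and yields the stated claim.
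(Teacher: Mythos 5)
Your proposal takes a genuinely different route from the paper, and it contains a real gap at exactly the point you flag as ``the most delicate step.'' The paper does not re-derive the chain of lemmas for a boy-side shift at all: it proves this lemma in three lines by duality. Switch the roles of boys and girls and reverse the partial orders on $\Mc_A$ and $\Pi_A$; in the resulting lattice $\overline{\Mc}_A$ with poset $\overline{\Pi}_A$, the shifted agent is a ``girl,'' so Theorem~\ref{thm:unique} applies verbatim to give unique rotations $\overline{\rho}_{\text{in}}$ and $\overline{\rho}_{\text{out}}$, and since reversing the order swaps ``going into'' with ``going out of,'' one concludes $\rin = \overline{\rho}_{\text{out}}$ and $\rout = \overline{\rho}_{\text{in}}$. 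Your direct approach correctly diagnoses the asymmetry in the boy-optimal lattice (the entering set collapses to one candidate for free, while the exiting set now has two candidates, $\rho_2'$ and $\rho_3'$), but the analog of Lemma~\ref{lem:rotationComparision} that collapses $O_{\Mc_{AB}}$ is precisely the nontrivial content of the lemma in this orientation, and you explicitly leave it unresolved (``once it is in hand\dots''). Note also that comparability alone is not enough; the paper's Theorem~\ref{thm:unique} uses the specific direction $\rho_1 \preceq \rho_2$, and the stability-based forcing argument in Lemma~\ref{lem:rotationComparision} does not transplant symbol-for-symbol, because the roles of ``partner only improves'' and ``partner only worsens'' are exchanged (e.g., if $\rho_3'$ drops $b$ to exactly $g$ rather than strictly below her, the stability argument that would force $g$ above $b$ does not fire). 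So as written the proposal does not establish $|O_{\Mc_{AB}}| \le 1$.

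The fix is either to carry out that missing comparison argument in full, or --- much more economically --- to observe that your entire re-derivation is unnecessary: reversing the lattice turns the boy-shift instance into a girl-shift instance to which the already-proved Theorem~\ref{thm:unique} applies, with $I$ and $O$ interchanged. That duality is the paper's proof, and it is the cleanest way to close the gap you identified.
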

\begin{proof}
Let us switch the roles of boys and girls, and reverse all partial orders in $\Mc_A$ and $\Pi_{A}$. Let $\overline{\Mc}_A$ and $\overline{\Pi}_{A}$ be the resulting matching lattice and rotation poset. Let $\overline{\rho}_{\text{in}}, \ \overline{\rho}_{\text{out}} \in \overline{\Pi}_{A}$ be the rotations leading from $\Mc_{A} \cap \Mc_{B}$ to $\Mc_{AB}$ and from
$\Mc_{AB}$ to $\Mc_{A} \cap \Mc_{B}$ in the this lattice, as guaranteed by Theorem~\ref{thm:unique}.
Then $\rho_{\text{in}} = \overline{\rho}_{\text{out}}$ and $\rho_{\text{out}} = \overline{\rho}_{\text{in}}$.
\end{proof}

	\section{Efficient Algorithm for Finding a Robust Stable Matching }
\label{sec.lp}

As stated in Section~\ref{sec.intro}, let $D$ be the domain of all possible shifts applied to each boy's list and each girl's list in instance $A$ and let $p$ be a discrete probability distribution on ${D}$. Pick {\em one} shift from $D$ under $p$ and let $C$ be the random variable denoting the resulting instance. As defined in Section~\ref{sec.intro}, a robust stable matching is a stable matching $M \in \Mc_{A}$ that minimizes the probability that $M \in \Mc_{AC}$.

For a particular choice of shift from $D$, let $B$ denote the resulting stable matching instance and  let $\rin^B$ and $\rout^B$ denote the rotations going into $\Mc_{AB}$ and out of $\Mc_{AB}$, respectively. 
By Proposition~\ref{pro:computeRotation}, $\rin^B$ and $\rout^B$ can be computed efficiently for each such $B$. For convenience, we will name the chosen shift also as $B$.

By Lemma~\ref{lem:computePoset}, $\Pi_A$ can be computed in polynomial time. We add two additional vertices to $\Pi_A$, a source $s$ preceding all other vertices and a sink $t$ succeeding all other vertices. 
For a shift $B$, we may ignore the cases where neither $\rin^B$ nor $\rout^B$ exist. In that case, either $\Mc_{A} = \Mc_{B}$ or $\Mc_{A} \intersect \Mc_{B} = \emptyset$. 
Hence, we may assume that $\Mc_{AB}$ is always a proper non-empty subset of $\Mc_{A}$.
For a shift $B$ such that $\rin^B$ does not exist, let $\rin^B = s$. Similarly, for a shift $B$ such that $\rout^B$ does not exist, let $\rout^B = t$.

Let $H$ be the Hasse diagram of $\Pi_A \union \{s,t\}$, defined as follows: The Hasse diagram of a poset is a directed graph with a vertex for each element in poset, and an
edge from $x$ to $y$ if $x \prec y$ and there is no $z$ such that $x \prec z \prec y$. In other words, all precedence relations implied by transitivity are suppressed.

For each $B$, let $e_B = (\rho_{\text{out}}^B , \rho_{\text{in}}^B )$ and $F = \{ e_B | B \in D\}$. The integer program is as follows: 

\begin{equation}
\label{ip}
\tag{IP}
\begin{aligned}
\min  &~~~ \sum_{B} x_B p_B  \\
\st   &~~~ y_{u} - y_{v} \leq 0 &\quad & \forall (u,v) \in E(H) \\
&~~~ y_t = 1, ~ y_s = 0 \\
&~~~ x_B \geq y_{u} - y_{v}  &\quad & \forall (u,v) = e_B \in F \\
&~~~ x_B \geq 0  &\quad & \forall B \\
&~~~ y_v \in \{0,1\} &\quad &\forall v \in H. 
\end{aligned}
\end{equation}

\begin{lemma}
	\label{lem.correctness}
	An optimal solution to (\ref{ip}) gives a robust stable matching.
\end{lemma}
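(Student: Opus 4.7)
The plan is to identify the integer feasible solutions of \eqref{ip} with stable matchings in $\Mc_A$ in such a way that the objective value at a 0/1 solution equals the probability that the corresponding matching falls in $\Mc_{AC}$ under $p$. Minimization then reproduces exactly the defining property of a robust stable matching.

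First I would unpack the $y$-variables. The edge constraints $y_u \le y_v$ for $(u,v) \in E(H)$ propagate transitively to give $y_u \le y_v$ whenever $u \preceq v$ in the poset on $\Pi_A \cup \{s,t\}$; together with $y_s = 0$ and $y_t = 1$, this means a 0/1 assignment to $y$ is feasible if and only if the set $S \defeq \{v \in \Pi_A : y_v = 0\}$ is a closed subset of $\Pi_A$. By the rotation-poset correspondence recalled in the preliminaries, such $S$ are in bijection with matchings $M \in \Mc_A$, so 0/1 feasible $y$'s are in bijection with $\Mc_A$.

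Next I would handle the $x$-variables. Since the only constraints on $x_B$ are $x_B \ge 0$ and $x_B \ge y_{\rout^B} - y_{\rin^B}$ and we are minimizing, an optimum sets $x_B = \max(0,\, y_{\rout^B} - y_{\rin^B})$. On a 0/1 $y$ this is $1$ exactly when $y_{\rin^B} = 0$ and $y_{\rout^B} = 1$, i.e.\ when $\rin^B \in S$ and $\rout^B \notin S$ (the conventions $\rin^B := s$ and $\rout^B := t$ for missing rotations are exactly what makes the forced values $y_s = 0$, $y_t = 1$ produce the right answer). The crux is then the biconditional
\[
M \in \Mc_{AB} \iff \rin^B \in S \ \text{ and } \ \rout^B \notin S.
\]
The ($\Rightarrow$) direction is Lemma~\ref{lem:subset}. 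For ($\Leftarrow$), closedness together with $\rin^B \in S$ forces $\Pin \subseteq S$, and closedness together with $\rout^B \notin S$ forces $\Pout \cap S = \emptyset$, so $S = \Pin \cup S'$ with $S'$ a closed subset of $\Pi_{AB}$. Theorem~\ref{lem:rhoAB} then asserts that this $S$ generates a matching in $\Mc_{AB}$, obtained by starting from $\Mb$ and eliminating $S'$.

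Combining these steps, at any integer optimum of \eqref{ip} the objective equals $\sum_B p_B \cdot \one[M \in \Mc_{AB}]$, which is precisely the probability under $p$ that $M \in \Mc_{AC}$; hence the optimal $M$ is robust. The main obstacle I anticipate is the bookkeeping around the dummy vertices $s$ and $t$: one must verify that the cases where $\rin^B$ or $\rout^B$ fails to exist (with $\Mb = M_0$ or $\Mg = M_z$) reduce correctly to the general analysis via the substitutions $\rin^B \mapsto s$ and $\rout^B \mapsto t$. This is a short case analysis, but it is the only place where the argument can go subtly wrong.
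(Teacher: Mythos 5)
Your proof is correct and follows essentially the same route as the paper's: feasible $0/1$ assignments to $y$ correspond to closed subsets of $\Pi_A$, and $x_B=1$ at an optimum exactly when the generated matching lies in $\Mc_{AB}$. The paper asserts that last equivalence without justification, whereas you substantiate both directions via Lemma~\ref{lem:subset} and Theorem~\ref{lem:rhoAB} (including the $s,t$ conventions); this is added detail, not a different approach.
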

\begin{proof}
	Let $S = \{\rho: y_v = 0\}$. The set of constraints:
	\[ y_{u} \leq y_{v} \quad  \forall (u,v) \in E(H) \] 
	guarantees that $S$ is a closed subset. 
	
	Notice that $x_B = 1$ if and only if $y_{\rout^B} = 1$ and $y_{\rin^B} = 0$. This, in turn, happens if and only if the matching generated by $S$ is in $\Mc_{AB}$.
	Therefore, by minimizing $\sum_{B} x_B p_B$, we can find a closed subset that generates a robust stable matching. 
\end{proof}


Next, consider the LP-relaxation of this IP:

\begin{equation}
\label{lp}
\tag{LP}
\begin{aligned}
\min  &~~~ \sum_{B} x_B p_B  \\
\st   &~~~ y_{u} - y_{v} \leq 0 &\quad & \forall (u,v) \in E(H) \\
&~~~ y_t - y_s = 1 \\
&~~~ x_B \geq y_u - y_v  &\quad & \forall e_B = (u,v) \in F \\
&~~~ x_B \geq 0  &\quad & \forall B \\
\end{aligned}
\end{equation}

Define $f_{uv}$, $f_{st}$ and $g_{uv}$ to be the dual variables corresponding to the first three constraints of (\ref{lp}). Then its dual is:
 
\begin{equation}
\label{dp}
\tag{DP}
\begin{aligned}
\max  &~~~ f_{st} \\
\st   &~~~ \sum_{v: (v,u) \in E(H)} f_{vu}  + \sum_{v: (v,u) \in F} g_{vu} = \sum_{v: (v,u) \in E(H)} f_{uv}  + \sum_{v: (v,u) \in F} g_{uv} &\quad& \forall u \in H\\
&~~~ g_{uv} \leq p_B &\quad& \forall e_B = (u,v) \in F \\
&~~~ f_{uv} \geq 0 &\quad& \forall  (u,v) \in E(H) \\
&~~~ g_{uv} \geq 0 &\quad& \forall  (u,v) \in F
\end{aligned}
\end{equation}

We will interpret (\ref{dp}) as solving a maximum circulation problem in the following network $N$: It has three sets of edges, $\{(s, t)\}, E(H)$ and $F$. The edges in $E(H)$ are of infinite capacity and the flow on $e \in E(H)$ is denoted by $\ff_e$. $F$ contains edges with capacity $\pp$, and the flow on $e \in F$ is denoted by $\gg_e$. The goal is to push the maximum amount of flow from $t$ to $s$ through $E(H) \union F$ and then back from $s$ to $t$ on edge $(s, t)$ of infinite capacity. 

\begin{lemma}
	\label{lem:combinatorial}
	(\ref{lp}) always has an integral optimal solution and there is a combinatorial polynomial time algorithm for solving it.
\end{lemma}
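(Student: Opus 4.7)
The plan is to interpret the dual (\ref{dp}) as a maximum circulation problem (equivalently, a max-flow) on the network $N$ described just above the lemma, and then use the max-flow / min-cut correspondence to extract an integral optimal primal solution of (\ref{lp}).

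First, I would verify that (\ref{dp}) really is the max-circulation problem in $N$. The first group of dual constraints expresses flow conservation at every vertex $u$; the constraint $g_{uv} \le p_B$ matches the stated capacity $p_B$ on the $F$-edge $e_B=(u,v)$; the $E(H)$-edges carry uncapacitated flow $f_{uv} \ge 0$; and the objective $\max f_{st}$ is the value of the $t$-to-$s$ circulation sent back through the auxiliary arc $(s,t)$ of infinite capacity. Hence (\ref{dp}) is exactly a max $t$-to-$s$ flow in the graph $(V(H), E(H) \cup F)$ with the stated capacities.

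Next, I would set up the $\{0,1\}$-cut correspondence with (\ref{lp}). Any integral feasible $y$ with $y_s=0$, $y_t=1$ induces a partition $(S,T)$ with $S=\{v : y_v=0\}$ and $T=\{v : y_v=1\}$. The inequalities $y_u-y_v \le 0$ on $E(H)$ say no $E(H)$-edge crosses from $T$ to $S$; the constraints $x_B \ge y_u-y_v$ together with $x_B\ge 0$ force $x_B=1$ for every $e_B=(u,v)\in F$ that crosses from $T$ to $S$, and permit $x_B=0$ otherwise. Thus $\sum_B p_B x_B$ equals the capacity of the $t$-$s$ cut $(T,S)$ in $N$ restricted to $F$-edges, provided no $E(H)$-edge is cut. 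Minimizing over such integral $y$ therefore coincides with the minimum $t$-$s$ cut value in $(V(H),E(H)\cup F)$.

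By LP duality (\ref{lp}) and (\ref{dp}) share the same optimal value, and by max-flow/min-cut this value equals the minimum $t$-$s$ cut. Consequently any min cut yields an integral optimal solution of (\ref{lp}), establishing integrality. Algorithmically, I would build the rotation poset $\Pi_A$ and its Hasse diagram $H$ via Lemma \ref{lem:computePoset}, attach $s$, $t$, and the $F$-edges (one per shift in $D$, so $|F|=\poly(n)$), replace the ``infinite'' $E(H)$-capacities by any $M > \sum_B p_B$, and run any polynomial combinatorial max-flow algorithm on the resulting $O(n^2)$-vertex network; reading off the side of the associated min cut produces the desired $(x,y)$. The one place I expect to have to be careful is making sure the min cut never crosses an $E(H)$-edge so that the induced $y$ is genuinely feasible for (\ref{lp}); this is forced by the choice $M>\sum_B p_B$, since any cut crossing an $E(H)$-edge would exceed the total $F$-capacity and so cannot be a minimum cut.
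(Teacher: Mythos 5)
Your proposal is correct and follows essentially the same route as the paper: both interpret (\ref{dp}) as a max-flow from $t$ to $s$ on the network $N$ and recover an integral optimal primal solution from the resulting minimum cut (the paper takes $y_u=1$ exactly on the set of vertices reachable from $t$ in the residual graph, which is the $t$-side of that cut, and certifies optimality by complementary slackness rather than by your value-matching chain through max-flow/min-cut). Your explicit big-$M$ argument for why no $E(H)$-edge is cut is a reasonable way to make precise what the paper leaves implicit with ``infinite capacity.''
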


\begin{proof}
First, remove edge $st$ from the above described network and find a maximum flow from $t$ to $s$ and denote it by $\ff, \ \gg$. Obtain the residual graph, which clearly will not have any paths from $t$ to $s$. Let $R$ be the set of vertices reachable from $t$ using residual edges. Construct $(\xx,\yy)$ as follows: 
	 \[
	y_u = \begin{cases}
	1 & \text{if }  u \in R \\
	0 & \text{otherwise } 
	\end{cases}\\ 
	\]
	\[
	x_B = \min\{ y_{\rho_{\text{out}}^B} - y_{\rho_{\text{in}}^B}, 0\}
	\]
	Clearly, $(\xx,\yy)$ is integral. Moreover, $(\xx,\yy)$ and $(\ff,\gg)$ satisfy complementarity: 
	\begin{itemize}
		\item $x_B(g_{uv} - p_B) = 0$ for all $e_B = (u, v) \in F$.
		\item $f_{uv}(y_u - y_v) = 0$ for all $(u,v) \in E(H)$.
		\item $g_{uv}(y_u - y_v - x_B) = 0$ for all $e_B = (u, v) \in F$.
	\end{itemize}
	Hence, $(\xx,\yy)$ is an integral optimal solution for (\ref{lp}).
\end{proof}

{\bf Remark.}  Notice that we formulated the IP for robust stable matching as a minimization problem, i.e, minimizing the probability that the matching $M$ is in $\Mc_{AB}$. The reason is that this involves testing only whether $S$ crosses the edge $(\rout^B, \rin^B)$. On the other hand, checking whether $M$ is in $\Mc_A \cap \Mc_B$ involves testing two edges in general, i.e,
that $S$ crosses $(t, \rout^B)$ or $(\rin^B, s)$. The latter would not have led to a linear IP.

{\bf Remark.} 
We can now explain why downward shifts are much more difficult to deal with.
Observe that the guarantee that there is at most one rotation, $\rin$, that leads from 
$\Mc_{A} \cap \Mc_{B}$ to $\Mc_{AB}$ and at most one rotation, $\rout$ that leads from 
$\Mc_{AB}$ to $\Mc_{A} \cap \Mc_{B}$ is crucial for forming the IP and showing that its LP-relaxation always has integral an integral solution. For the case of downward shifts, there may be more than one rotation that lead from $\Mc_{A} \cap \Mc_{B}$ to $\Mc_{AB}$ and from 
$\Mc_{AB}$ to $\Mc_{A} \cap \Mc_{B}$. Although we can still formulate an IP for this case, its LP-relaxation is not guaranteed to have an integral solution.

\subsection{Extending to incomplete preference lists}
\label{sec:incomplete}

Finally, we show how to extend our algorithm to the generalization of stable matching to incomplete preference lists. It is well known that the set of unmatched agents remain unchanged
under all stable matchings in this case, e.g., see \cite{GusfieldI}. Let $A$ be the given instance and let $B$ be obtained by executing an upward shift on one agent's list. If the set of unmatched agents under $B$ is not the same as under $A$, then $\Mc_{A} \intersect \Mc_{B} = \emptyset$ and we can ignore $B$. Otherwise, the sets of stable matchings in $A$ and $B$ are defined over the same subset of the agents and all our results carry over. 

The first part of
Theorem~\ref{thm.robust} now follows from Lemmas \ref{lem.correctness} and \ref{lem:combinatorial}.

\section{Succinct Representation for the Sublattice of Robust Stable Matchings}
\label{sec.rep}

We first prove that the set of robust stable matchings forms a sublattice of the lattice of all stable matchings of the given instance. We then use our combinatorial solution of (\ref{dp}) to show how to obtain a succinct structure that helps generate all matchings from this sublattice.
 
\begin{lemma}
	\label{lem:sublattice}
	The set of robust stable matchings to instance $A$ under probability distribution $p$ forms a sublattice of $\Lc_A$. 
\end{lemma}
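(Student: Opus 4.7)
The plan is to establish that the function $f(M) = \Pr_p[M \in \Mc_{AC}]$ is submodular on the lattice $\Lc_A$, and then invoke the standard minimizer-closure argument: any two minimizers of a submodular function on a lattice are automatically joined and met by minimizers.

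First I would pass to the closed-set representation. Each $M \in \Lc_A$ corresponds to a closed subset $S \subseteq \Pi_A$, with $M_1 \wedge M_2 \leftrightarrow S_1 \cap S_2$ and $M_1 \vee M_2 \leftrightarrow S_1 \cup S_2$. For each shift $B \in D$, adopt the convention from Section~\ref{sec.lp} that $\rin^B = s$ if it does not exist and $\rout^B = t$ if it does not exist. By Lemma~\ref{lem:subset} (together with Theorem~\ref{lem:rhoAB}), $M \in \Mc_{AB}$ if and only if $\rin^B \in S$ and $\rout^B \notin S$. Setting $\alpha_i = \mathbbm{1}[\rin^B \in S_i]$ and $\beta_i = \mathbbm{1}[\rout^B \notin S_i]$, this yields
\[
\mathbbm{1}[M_i \in \Mc_{AB}] = \alpha_i \beta_i,\qquad
\mathbbm{1}[(M_1 \wedge M_2) \in \Mc_{AB}] = \alpha_1 \alpha_2 \max(\beta_1,\beta_2),
\]
\[
\mathbbm{1}[(M_1 \vee M_2) \in \Mc_{AB}] = \max(\alpha_1,\alpha_2)\,\beta_1 \beta_2.
\]

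The main computational step is the pointwise indicator inequality
\[
\alpha_1 \alpha_2 \max(\beta_1,\beta_2) + \max(\alpha_1,\alpha_2)\,\beta_1 \beta_2 \;\le\; \alpha_1 \beta_1 + \alpha_2 \beta_2,
\]
which is verified by a short case analysis on $\{0,1\}^4$ (w.l.o.g.\ $\alpha_1 \geq \alpha_2$; the only nontrivial case is $\alpha_1=\alpha_2=1$, where both sides evaluate to $\max(\beta_1,\beta_2) + \beta_1\beta_2 = \beta_1 + \beta_2$). Multiplying by $p_B$ and summing over $B \in D$ gives the submodularity inequality
\[
f(M_1 \wedge M_2) + f(M_1 \vee M_2) \;\le\; f(M_1) + f(M_2).
\]

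Finally, suppose $M_1$ and $M_2$ are robust, so $f(M_1) = f(M_2) = \mathrm{OPT} := \min_{M \in \Lc_A} f(M)$. Since $M_1 \wedge M_2$ and $M_1 \vee M_2$ are themselves stable matchings in $\Lc_A$, both $f(M_1 \wedge M_2) \geq \mathrm{OPT}$ and $f(M_1 \vee M_2) \geq \mathrm{OPT}$. Combined with the submodularity above, this forces $f(M_1 \wedge M_2) = f(M_1 \vee M_2) = \mathrm{OPT}$, so both are robust. Hence the set of robust stable matchings is closed under meet and join in $\Lc_A$ and thus forms a sublattice. The main obstacle is purely conceptual, namely recognizing that the indicator of membership in $\Mc_{AB}$ decomposes into an ``in-condition'' monotone in $S$ and an ``out-condition'' anti-monotone in $S$, so that submodularity holds at the level of each individual shift.
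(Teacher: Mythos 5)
Your proof is correct, and it rests on the same underlying fact as the paper's --- that the objective $f(M)=\sum_B p_B\,\mathbbm{1}[M\in\Mc_{AB}]$ is submodular on the distributive lattice of closed sets, so its minimizers are closed under meet and join --- but the execution is genuinely different. The paper works only at the two optimal sets $S_1,S_2$: it partitions the edge set $F$ into five classes according to where $\rout^B$ and $\rin^B$ fall relative to $S_1\cap S_2$, $S_1\setminus S_2$, $S_2\setminus S_1$ and the complement, equates the two optimal objectives, and runs an exchange argument to force $P_2=P_4$ and $P_3=P_5$. You instead prove the pointwise inequality $\alpha_1\alpha_2\max(\beta_1,\beta_2)+\max(\alpha_1,\alpha_2)\beta_1\beta_2\le\alpha_1\beta_1+\alpha_2\beta_2$ for each shift $B$ separately and sum. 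Your route buys two things. First, it yields the stronger statement that $f$ is submodular on all of $\Lc_A$, not merely an identity at the optimum. Second, it is more complete: the paper's five classes do not cover edges $e_B$ running from $S_2\setminus S_1$ to $S_1\setminus S_2$ (or the reverse), which are separated by exactly one of $S_1,S_2$ and by neither $S_1\cap S_2$ nor $S_1\cup S_2$; these are silently dropped from the paper's accounting (its conclusion survives because such terms only enlarge the right-hand side, which is exactly the slack in your inequality in the case $\alpha_1\ne\alpha_2$). Your appeal to Lemma~\ref{lem:subset} and Theorem~\ref{lem:rhoAB} for the characterization $M\in\Mc_{AB}\iff\rin^B\in S,\ \rout^B\notin S$ is the same equivalence the paper uses in Lemma~\ref{lem.correctness}, so nothing is missing there; the only caveat worth a sentence is that for degenerate shifts where neither rotation exists the $s/t$ convention makes the indicator constant over $\Lc_A$, which does not affect the set of minimizers.
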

\begin{proof}
	Let $M_1$ and $M_2$ be two robust stable matchings and let $S_1$ and $S_2$ be the corresponding closed subsets. It suffices to show that the matchings generated by $S_1 \union S_2$ and $S_2 \intersect S_2$ are also robust. We say that a closed subset $S$ \emph{separates} and edge  $(\rho_{\text{out}}^B, \rho_{\text{in}}^B )$ in $E$ if $\rho_{\text{out}}^B \not \in S$ and $\rho_{\text{in}}^B \in S$.
	
	Divide the edges in $F$ into 5 sets as follows: 
	\begin{enumerate}
		\item $F_1$ is the set of edges in $F$ from $V \setminus (S_1 \union S_2)$ to $S_1 \intersect S_2$.
		\item $F_2$ is the set of edges in $F$ from $S_1 \setminus S_2$ to $S_1 \intersect S_2$.
		\item $F_3$ is the set of edges in $F$ from $S_2 \setminus S_1$ to $S_1 \intersect S_2$.
		\item $F_4$ is the set of edges in $F$ from $V \setminus (S_1 \union S_2)$ to $S_1 \setminus S_2$.
		\item $F_5$ is the set of edges in $F$ from $V \setminus (S_1 \union S_2)$ to $S_2 \setminus S_1$.
	\end{enumerate}

	Let $P_i = \sum_{e_B \in F_i} p_B$ for each $1 \leq i \leq 5$.
	Since $M_1$ and $M_2$ are robust stable matchings, the objectives obtained by $S_1$ and $S_2$ in (\ref{ip}) are equal:
	\[ \sum_{S_1 \text{ separates } e_B} p_B =  \sum_{S_2 \text{ separates } e_B} p_B\]
	Therefore,
	\begin{align*}
	P_1 + P_2 + P_5 &= P_1 + P_3 + P_4 \\
	P_2 + P_5 &= P_3 + P_4.
	\end{align*} 
	
	We will show that $P_2 = P_4$ and $P_3 = P_5$. Assume without loss of generality that $P_2 > P_4$ and $P_3 > P_5$. Then 
	\[P_1 + P_2 + P_3 > P_1 + P_2 + P_5. \]
	In other words, the objective of (\ref{ip}) obtained by $S_1 \intersect S_2$ is smaller than the one obtained by $S_1$. This contradicts the fact that $M_1$ is robust stable matching.
	Therefore, $P_2 = P_4$ and $P_3 = P_5$ as desired.
	
	It follows that $P_1 + P_2 + P_3$ and $P_1 + P_4 + P_5$ also attain the minimum value of the objective function of (\ref{ip}). Hence, the matchings generated by $S_1 \union S_2$ and $S_2 \intersect S_2$ are also robust.
\end{proof}

By Lemma~\ref{lem:sublattice}, the set of robust stable matchings is a sublattice, say $\Lc_0$, of $\Lc_A$. By Birkhoff's Theorem \cite{Birkhoff} we know that there is a partial order, say $\Pi_0$, whose closed sets are isomorphic to $\Lc_0$. 
Next, we show how to construct $\Pi_0$ using our combinatorial solution for (\ref{dp}).

As in Lemma \ref{lem:combinatorial}, remove edge $st$ from network $N$ described above and find a maximum flow from $t$ to $s$ and denote it by $\ff, \ \gg$. Obtain the residual graph, say $G$.
The strongly connected components of $G$ are the elements of $\Pi_0$. 
Contracting the strongly connected components of $G$ yields a DAG $D$, which gives the precedence relations in $\Pi_0$. 
To be precise, $x \prec y$ in $\Pi_0$ if and only if there is a path from $x$ to $y$ in $D$.  

\begin{lemma}
	The closed sets of partial order $\Pi_0$ correspond exactly to robust stable matchings.
\end{lemma}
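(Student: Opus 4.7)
The plan is to use LP duality with the max-flow $(\ff, \gg)$ from Lemma~\ref{lem:combinatorial} to characterize robust matchings as 0-1 functions on $V(H)$ that are monotone along residual edges of $G$, and then to observe that such functions descend to closed sets of $\Pi_0$ after passing to strongly connected components.

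First I would invoke Lemma~\ref{lem.correctness}: a stable matching is robust iff it arises as $S = \{v : y_v = 0\}$ from an integer optimal solution $(y, x)$ of (\ref{lp}), with $y_s = 0$, $y_t = 1$, and $x_B = \max(y_u - y_v, 0)$ for $e_B = (u, v) \in F$. Since $(\ff, \gg)$ is dual optimal, such an integer $(y, x)$ is primal optimal exactly when complementary slackness against $(\ff, \gg)$ holds. The central observation is that for integer $(y, x)$ of this form, complementary slackness simplifies to the single condition ``$y_a \leq y_b$ for every residual edge $(a, b)$ of $G$.'' Verifying this is a four-case analysis: forward $E(H)$-edges $u \to v$ are always residual and $y_u \leq y_v$ is primal feasibility; reverse $E(H)$-edges $v \to u$ exist iff $f_{uv} > 0$, which by complementary slackness on the tight primal constraint forces $y_u = y_v$; forward $F$-edges $u \to v$ (with $e_B = (u, v)$) are residual iff $g_{uv} < p_B$, which by complementary slackness on $x_B \geq 0$ forces $x_B = 0$, hence $y_u \leq y_v$; reverse $F$-edges $v \to u$ are residual iff $g_{uv} > 0$, which by complementary slackness on $x_B \geq y_u - y_v$ forces $x_B = y_u - y_v$, and combined with $x_B \geq 0$ gives $y_v \leq y_u$. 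Conversely, any integer $y$ satisfying this monotonicity condition, paired with $x_B = \max(y_u - y_v, 0)$, is primal feasible and meets every complementary slackness condition, hence is primal optimal.

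Next, any 0-1 function $y$ monotone along every residual edge must be constant on each strongly connected component of $G$, since two vertices in the same SCC lie on directed cycles in both directions that force equality. Thus $y$ descends to a function $\bar y : \Pi_0 \to \{0, 1\}$, and monotonicity along residual edges between SCCs becomes monotonicity along the partial order of $\Pi_0$: if $X \prec Y$ then $\bar y_X \leq \bar y_Y$. Setting $\mathcal{C} = \{X \in \Pi_0 : \bar y_X = 0\}$, this is precisely the statement that $\mathcal{C}$ is closed in $\Pi_0$ (contains all predecessors), and conversely every closed set of $\Pi_0$ produces such a $\bar y$ and hence a valid $y$. The boundary conditions $y_s = 0, y_t = 1$ are preserved because the max flow from $t$ to $s$ leaves no residual path from $t$ to $s$, so SCC$(s)$ and SCC$(t)$ lie in separate components of $\Pi_0$ and are consistently placed so that the induced closed set contains SCC$(s)$ and excludes SCC$(t)$.

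The main obstacle I expect is the careful bookkeeping in the complementary-slackness unpacking: each of the four residual-edge types must be paired with the correct primal or dual tight constraint under the right sign convention. Once that is in place, the remaining content is the standard observation that minimum $t$-$s$ cuts of a max-flow network are parameterized by closed sets in the SCC condensation of its residual graph, which is exactly the structure $\Pi_0$ encodes.
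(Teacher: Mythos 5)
Your proposal is correct and follows essentially the same route as the paper: the paper's proof is a two-sentence min-cut argument (a closed set of $\Pi_0$ has no residual edges entering it, so the corresponding cut is saturated, hence optimal for (\ref{ip})), and your complementary-slackness case analysis is exactly the expanded form of that saturation condition. The four residual-edge cases and the SCC-condensation step are the standard details the paper leaves implicit, and your handling of the $y_s=0$, $y_t=1$ boundary conditions is consistent with the paper's construction.
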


\begin{proof}
	Let $S$ be a closed set in $\Pi_0$. There are no edges in the residual graph that go from $\overline{S}$ to $S$. Hence all edges in the cut $(\overline{S}, S)$ are fully saturated and therefore this cut minimizes the objective function of (\ref{ip}). Hence the corresponding matching is a robust stable matching. The reverse direction is straightforward.
\end{proof}

Recall that if $\Pi$ is the rotation poset for lattice $\Lc_A$, then the matching corresponding to any closed set $S$ in $\Pi$ is obtained by starting from the boy-optimal matching in $\Lc_A$ and applying all rotations in $S$ in any order consistent with a topological sort of $\Pi$. In \cite{MV.Birkhoff} we show that the corresponding process for finding the matching in $\Lc_0$ corresponding to a closed set $S$ of $\Pi_0$ is the following: The elements of $S$ are sets of rotations. Let $U$ be the union of all these sets. Now starting from the boy-optimal matching in $\Lc_A$, apply all rotations in $U$ in any order consistent with a topological sort of $\Pi$. This yields the matching in $\Lc_0$ corresponding to set $S$. Hence we get.

\begin{lemma}
\label{lem:gen}
	$\Pi_0$ generates the sublattice $\Lc_0$ of robust stable matchings.
\end{lemma}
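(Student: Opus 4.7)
The plan is to combine the bijection from the preceding lemma---closed sets of $\Pi_0$ correspond to robust stable matchings via saturated cuts of the residual graph $G$---with the generation prescription recalled just before the statement, and to show that this prescription realizes the same bijection. For each closed set $S$ of $\Pi_0$, I identify $S$ with the union of vertices of its SCCs in $G$, and set $U = S \cap \Pi_A$ for the rotations it contains.

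First I would verify that $U$ is a closed subset of $\Pi_A$. The key observation is that any $\Pi_A$-precedence $\rho' \prec \rho$ comes from a directed path in the Hasse diagram whose edges lie in $E(H)$ with infinite capacity, and these edges therefore retain infinite residual capacity in $G$. So if $\rho \in U$ and $\rho' \prec \rho$, a path $\rho' \to \rho$ exists in $G$; either $\rho'$ sits in $\rho$'s SCC---in which case $\rho' \in U$ trivially---or the SCC of $\rho'$ strictly precedes the SCC of $\rho$ in the contracted DAG $D$, hence in $\Pi_0$, and downward-closedness of $S$ forces the SCC of $\rho'$ into $S$, again giving $\rho' \in U$.

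With $U$ closed in $\Pi_A$, the standard rotation-poset machinery (\Cref{lem:seqElimination} together with the discussion in Section~\ref{sec.pre-rotations}) produces a well-defined stable matching $M_S \in \Lc_A$ by eliminating the rotations of $U$ from $M_0$ in any order consistent with a topological sort of $\Pi_A$, and by the Birkhoff correspondence for $\Lc_A$, $U$ is then precisely the closed subset of $\Pi_A$ that generates $M_S$.

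Finally I would identify $M_S$ with the robust stable matching attached to $S$ by the preceding lemma. That lemma associates $S$ to the cut $(\overline{S},S)$ of $G$; by the construction in the proof of \Cref{lem:combinatorial} this cut corresponds to the primal LP solution with $y_\rho = 0$ iff $\rho \in S$, and by \Cref{lem.correctness} the matching generated by the rotations $\{\rho \in \Pi_A : y_\rho = 0\} = U$ is a robust stable matching. That matching is $M_S$, so $S \mapsto M_S$ recovers the bijection of the preceding lemma, proving that $\Pi_0$ generates $\Lc_0$. The one real obstacle is the closure verification in the first step, which depends crucially on the infinite capacity of the $E(H)$ edges; the remaining identifications are bookkeeping inherited from the LP-max-flow correspondence and the Birkhoff correspondence for $\Lc_A$.
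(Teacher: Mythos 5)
Your proposal is correct and follows essentially the same route as the paper: the paper states the union-of-rotations generation prescription and defers its justification to the followup work \cite{MV.Birkhoff}, whereas you supply that justification directly. Your key step---that the infinite-capacity edges of $E(H)$ always survive in the residual graph, so every $\Pi_A$-predecessor of a rotation lying in a closed set $S$ of $\Pi_0$ falls in the same or an earlier strongly connected component, making $U$ a closed subset of $\Pi_A$---is exactly the fact needed, and the remaining identifications via Lemma~\ref{lem.correctness} and the saturated-cut argument of the preceding lemma match the paper's intent.
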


The second part of Theorem \ref{thm.robust} follows from Lemmas \ref{lem:sublattice} and \ref{lem:gen}.

	\section{Discussion}
\label{sec.discussion}

As stated in the Introduction, the two main questions on stable matching considered in this paper
are obtaining efficient algorithms for finding solutions that are robust to errors in
the input, and the structural question of finding 
relationships between the lattices of solutions of two ``nearby'' instances. The current paper
and our followup work \cite{MV.Birkhoff} seem to suggest that both these issues are likely
to lead to much work in the future. In particular, the structural results are so clean and extensive
that they are likely to find algorithmic applications beyond the problem of finding robust 
solutions. One possible domain of applications that may be able to exploit these structural
properties is matching-based markets, particularly as we are seeing ever more interesting such 
markets being designed and launched on the Internet, e.g., see \cite{Roth}.

At a more detailed level, we have left the open problem of dealing with downward shifts. The
domain $D$, for which we have obtained our algorithm, is very restrictive and we need to 
extend it to a larger domain. Our followup paper \cite{MV.Birkhoff} partially does this,
though it works with a weaker notion of ``robust''. It seems more should be doable; in particular, what happens if two or more errors are introduced simultaneously? 

Beyond these questions, pertaining to the most basic of formulations of stable matching, one can 
study numerous variants and generalizations, e.g., the stable
roommates problem, and matching intern couples to hospitals. Each of these bring
their own structural properties and challenges, e.g., see \cite{GusfieldI, Manlove-book}.

	\section{Acknowledgements}
\label{sec.ack}

We wish to thank David Eppstein, Mike Goodrich and Sandy Irani for interesting discussions
that sparked off the question addressed in this paper.

	\bibliographystyle{alpha}
	\bibliography{refs}
\end{document}